\theoremstyle{plain} \newtheorem{theorem}{Theorem}
\numberwithin{equation}{section} \numberwithin{table}{section} \numberwithin{theorem}{section}
\begin{document}

\title{\textbf{Multi--particle dynamical systems and polynomials}}

\author{Maria V. Demina and Nikolai A. Kudryashov}

\date{Department of Applied Mathematics\\
National Research Nuclear University "MEPhI"\\
31 Kashirskoe Shosse, 115409, Moscow, \\ Russian Federation}

\maketitle

\begin{abstract}

Polynomial  dynamical systems describing interacting particles in the plane  are studied. A method replacing integration of a polynomial  multi--particle dynamical system by finding polynomial solutions of a partial differential equations is described.    The method enables one to integrate a wide class of polynomial multi--particle dynamical systems.
The general solutions of certain dynamical systems related to linear second--order partial differential equations are found. As a by-product of our results, new families of orthogonal polynomials are derived. Our approach is also applicable to dynamical systems that are not multi--particle
by their nature but that can be regarded as multi--particle (for example, the  Darboux--Halphen system and its generalizations). A wide class of two and three--particle polynomial dynamical systems is integrated.

\end{abstract}

\textit{Keywords:}
Multi--particle dynamical systems, polynomial solutions of partial differential equations, orthogonal polynomials

\section{Introduction} \label{I:MP}

Integrating an ordinary differential equation is one of the major problems of analysis.
With the exception of certain classes of ordinary differential equations this problem is rather complicated. A. Cauchy
suggested to study ordinary differential equations within the framework of complex analysis, what allowed him to obtain significant
results on local existence and uniqueness of solutions. S.V. Kovalevskaya was one of the first mathematicians who noted a remarkable connection between explicit
integrability of an ordinary differential equation and the singularity structure of its general solution \cite{Kowalevski01, Kowalevski02}. Ideas of S.V. Kovalevskaya were extended and developed by E. Picard, G. Mittag--Leffler, P. Painlev\'{e}, B. Gambier, F.J. Bureau.

If the general solution of an  ordinary differential equation
does not have movable critical points, then it can be uniformized to fit the definition of a function (as a single valued mapping). Absence of movable critical points in the general solution of an ordinary differential equation is now called the Painlev\'{e} property in honor of French mathematician P. Painlev\'{e}.  It can be concluded
that  an ordinary differential equation possessing the  Painlev\'{e} property is integrable in known functions or itself gives rise to a new function. L. Fuchs and H. Poincar\'{e} suggested to look for new functions defined by ordinary differential equations. This problem can be solved in two steps. The first step is to make a classification of ordinary differential equations in a certain class, whose general solutions do not have movable critical points and, consequently, define functions. The second step consists in selection of those equations that possess the general solutions not expressible via known functions.

All linear equations generate functions. Nonlinear first--order algebraic ordinary differential equations give rise to only one new class of functions, the elliptic functions. A considerable contribution into the classification of second--order ordinary differential equations with the general solutions without movable critical points was done by P. Painlev\'{e}.

At the turn of the twentieth century  the research group headed by P. Painlev\'{e} performed the classification of second--order ordinary differential equations of the form
\begin{equation}
\begin{gathered}
\label{Introduction_Painleve}
w_{tt}=R(w_t,w,t),
\end{gathered}
\end{equation}
 where $R$ is rational in $w_t$ and $w$ \cite{Painleve01, Gambier01}. P. Painlev\'{e} and his colleagues found fifty canonical equations with the general solutions possessing  the Painlev\'{e} property.  Forty four equations can be integrated in terms of previously known functions  and six equations required the introduction of new special functions. Nowadays these equations are called the Painlev\'{e} equations and their general solutions are referred to as the Painlev\'{e} transcendents. The complete list of equations~\eqref{Introduction_Painleve} with the Painlev\'{e} property can be found in \cite{Ince01}.

 The classification programm did not finish at second--order equations. J. Chazy made the classification of third--order ordinary differential equations in the polynomial class  possessing the Painlev\'{e} property~\cite{Chazy01}. J. Chazy considered equations in the form
\begin{equation}
\begin{gathered}
\label{Chazy_class}
w_{ttt}=R(w_{tt},w_{t},w,t),
\end{gathered}
\end{equation}
where $R$ is a polynomial in $w_{tt}$, $w_{t}$, $w$. The work of J. Chazy was developed by F.J. Bureau, H. Exton, I. P. Martynov, C.M. Cosgrove, see~\cite{Cosgrove01} and references therein. The classification of ordinary differential equations is still going on. There exists a number of fourth and higher--order nonlinear  equations that are supposed to define new functions \cite{Airault01, Flaschka01, Kudryashov01, Kudryashov02}. However, this hypothesis is not proved yet \cite{Kudryashov02}.

The problem of integrating a system of nonlinear ordinary differential  equations, especially if one can not obtain a single equation satisfied by some dependent function expressing all other functions from given equations, is even more difficult. In this article we consider systems of ordinary differential equations that can be regarded as multi--particle in the sense that given equations describe dynamics of point particles in the plane. Our aim is to integrate certain classes of multi--particle dynamical systems.

Multi–-particle dynamical systems, such as collections of interacting point vortices
in the plane and on a sphere, have been attracting much attention during recent years.
The point vortex system, an elegant and visual model of fluid dynamics, is not
integrable in the case of four or more vortices with generic choice of circulations \cite{Aref05, Borisov04}.
That is why particular motions, including relative and absolute equilibria, collapse, and
scattering are of great importance. A well--known class of absolute equilibria involving
point vortices with equal in absolute value circulations is given by the roots of two neighbor Adler--Moser polynomials \cite{Bartman01}. It is a remarkable fact that the roots of the Adler--Moser polynomials
themselves provide solutions of another multi--particle system, the Airault--McKean--Moser
dynamical system, related to the Korteweg -- de Vries equation~\cite{Airault04, Adler01}.

Not long ago a method enabling one to find absolute and relative equilibrium configurations of point vortices
in the plane and on a sphere was introduced and developed \cite{Aref01, Oneil02, Demina25, Demina26, Demina27, Demina28, Demina29, Demina30}. The starting point of the method is a polynomial or a system
of polynomials with roots at the vortex positions. Further these polynomials are
shown to obey certain ordinary differential equations. The strength of the method lies in the fact that it works in both directions: from,
for example, point vortex relative equilibria to a differential equation and vice versa.

In this article we generalize the polynomial method to the case of polynomial  multi--particle dynamical systems.
In fact, we shall study  two problems.
The first problem consists in finding a dynamical system satisfied by the roots of a polynomial that obeys a partial differential
equation. The second problem is opposite to the first. Suppose that one originates with a polynomial multi--particle dynamical system.
The problem is to find a partial differential equation of degree less than the amount of particles such that the polynomial having  roots coinciding with
the particle positions in the plane satisfies this equation.

The first problem has been  considered previously (see \cite{Airault04, Choodnovsky01}, works by F. Calogero \cite{Calogero01, Calogero02} and references therein). However, most of the authors deal with the case of rational solutions. In other words, there have been subsequent studies devoted to dynamical systems satisfied by the poles of rational solutions of integrable partial differential equations \cite{Airault04, Choodnovsky01,Calogero01, Calogero02}. While it seems that the second problem yet has not  attracted any attention.

Using the polynomial method we find solutions of some interesting dynamical systems related to linear second--order partial differential equations. As a byproduct of our results we derive new families of orthogonal polynomials.

Our area of consideration includes not only dynamical systems multi--particle by their nature, but also systems that  can be regarded as multi--particle. We integrate a wide class of two and three--particle polynomial  dynamical systems. This class includes a number of systems interesting from physical point of view, such as the Euler's and Darboux--Halphen systems. The Euler's equations arise in rigid body dynamics. The Darboux--Halphen system finds applications in mathematical physics in relation to magnetic monopole dynamics, self--dual Einstein equations, topological field theory and in other fields of science \cite{DH, Darboux01, Halphen01, Halphen02, Gibbons01}.   Some  generalizations of these systems also have a number of applications \cite{DH, Ablowitz01}. For a class of two and three--particle polynomial  dynamical systems including the aforementioned examples and their generalizations we present the cases, when such systems can be regarded as integrable in the sense that one can obtain their general solutions. For some other cases we give a number of exact  elliptic solutions.  In fact, we find all second--order elliptic solutions of a third--order differential equation arisen after application of the polynomial method.

The problem of finding and classifying exact solutions of nonlinear ordinary and partial differential equations is of great theoretical and practical importance.
In the past decades there has been a significant progress in the development of
these methods. Group theoretical techniques provide  exact solutions of equations possessing certain symmetries.   Several powerful methods, such as the Hirota
bilinear method, algorithms based on Darboux transformations and Wronskian representations, are designed mainly for partial differential equations integrable by inverse scattering method. An enormous number of methods deal with traveling wave solutions. Let us name only a few: the tanh--function method, the exponential method, the Jacobi elliptic--function method and their various extensions and modifications. Most of these methods use fixed expressions of unknown solutions. Consequently, all solutions lying outside supposed representations  are lost. Along with this, such methods usually give the same solutions, but written in a different way. Consequently, this class of methods cannot be used if one wishes to perform a classification of exact solutions with given properties. In this article we shall use a method of finding exact elliptic solutions, which is free from these disadvantages.

Our approach is based on Mittag--Leffler's expansions of a meromorphic functions. This method allows one to find explicitly any elliptic solution of an algebraic ordinary differential equation. Consequently, the method can be used if one needs to classify elliptic solutions.

This article is organized as follows. In section \ref{Multi_particle_gen} we present our method and study the problem of finding dynamical systems related to polynomial solutions of partial differential equations. Section \ref{Symmetric_DS} is devoted to the problem of constructing a partial differential equation possessing polynomial solutions with the roots obeying a given polynomial multi--particle dynamical system. In section \ref{Nparticle_linear} we consider multi--particle dynamical systems related to linear second--order partial differential equations.  In section \ref{Sec:MP23} we study a class of  two and three--particle polynomial dynamical systems, including physically meaningful ones.


\section{Method applied} \label{Multi_particle_gen}

We begin with some preliminary remarks. Consider a polynomial in $z$ with time--dependent coefficients:
\begin{equation}
\begin{gathered}
\label{Polynomial_general_coeff}
p(z,t)=c_0(t)z^M+\sum_{j=0}^{M-1}c_{M-j}(t)z^j,\quad c_0(t)\not\equiv 0.
\end{gathered}
\end{equation}
We suppose that the polynomial $p(z,t)$ does not have multiple roots. This assumption gives a representation
\begin{equation}
\begin{gathered}
\label{Polynomial_general_roots}
p(z,t)=c_0(t)\prod_{j=1}^{M}\left\{z-a_j(t)\right\},
\end{gathered}
\end{equation}
where $a_j(t)\not\equiv a_k(t)$, $j\neq k$. Unless otherwise is stated, let $z$ be a complex variable and let $a_j(t)$, $c_{j}(t)$ be complex--valued functions of a complex variable $t$. If we are interested in physically relevant solutions, then we shall restrict ourselves with real values of $t$. Calculating logarithmic derivatives of the polynomial $p(z,t)$ yields relations
\begin{equation}
\begin{gathered}
\label{Logarithmic derivatives_z}
\frac{p_z}{p}=\sum_{j=1}^{M}\frac{1}{z-a_j},\quad \frac{p_{zz}}{p}-\frac{p_{z}^2}{p^2}=\sum_{j=1}^{M}\frac{(-1)}{\left\{z-a_j\right\}^2},
\quad \frac{d^n}{dz^n}\log p =\sum_{j=1}^{M}\frac{(-1)^{n+1}(n-1)!}{\left\{z-a_j\right\}^{n}}
\end{gathered}
\end{equation}
and
\begin{equation}
\begin{gathered}
\label{Logarithmic derivatives_tz}
\frac{p_t}{p}=\sum_{j=1}^{M}\frac{\left(-a_{j,t}\right)}{z-a_j}+\frac{c_{0,t}}{c_0},\quad \frac{p_{tz}}{p}-\frac{p_tp_z}{p^2}=\sum_{j=1}^{M}\frac{\left(-a_{j,t}^2\right)}{\left\{z-a_j\right\}^2},\\
\frac{p_{tt}}{p}-\frac{p_{t}^2}{p^2}= \sum_{j=1}^{M}\frac{\left(-a_{j,tt}\right)}{z-a_j}+\sum_{j=1}^{M}\frac{\left(-a_{j,t}^2\right)}{\left\{z-a_j\right\}^2}+\frac{c_{0,tt}}{c_0}-\frac{c_{0,t}^2}{c_0^2}
\end{gathered}
\end{equation}
For further convenience we introduce notation
\begin{equation}
\begin{gathered}
\label{Polynomials_Notation}
L_m=\sum_{j=1,j\neq k}^{M}\frac{1}{\left\{a_k-a_j\right\}^m},\quad
G_m=\sum_{j=1,j\neq k}^{M}\frac{a_{j,t}}{\left\{a_k-a_j\right\}^m},\quad m\in\mathbb{N}.
\end{gathered}
\end{equation}
The derivatives of the function $a_k(t)$ and the quantities $L_m$, $G_m$ can be expressed through the derivatives of the polynomial $p(z,t)$. Let us consider in detail the derivation of $a_{k,t}$, $a_{k,tt}$, $L_1$. Multiplying the first relation in \eqref{Logarithmic derivatives_tz}
by $p$ and calculating the limit $z\rightarrow a_k$ we easily get $a_{k,t}=-\{p_t/p_z\}_{z=a_k}$.  In order to find the limit in the right--hand
side of the relation
we have applied the l'H\^{o}pital's rule. Since the polynomial $p(z,t)$ does not have multiple roots, the following condition is valid $\{p_z\}_{z=a_k}\not\equiv 0$. Further, taking the second relation  in \eqref{Logarithmic derivatives_tz}
we find
\begin{equation}
\begin{gathered}
\label{Nparticle_calculations1}
a_{k,tt}\{p_z\}_{z=a_k}=\lim_{z\rightarrow a_k}\frac{(p_t^2-p_{tt}p)\left\{z-a_k\right\}^2-a_{k,t}^2p}{\left\{z-a_k\right\}^2p}=\left\{\frac{2p_tp_z p_{tz}-p_t^2p_{zz}-p_z^2p_{tt}}{p_z^2}\right\}_{z=a_k}.
\end{gathered}
\end{equation}
Let us calculate the quantity $L_1$. For this aim we take the first relation in \eqref{Logarithmic derivatives_z}, subtract from both sides the expression $\{z-a_k\}^{-1}$ and use the  l'H\^{o}pital's rule to obtain $L_1=\{p_{zz}/(2p_z)\}_{z=a_k}$. Analogously one can calculate higher--order derivatives of the function $a_k$, if necessary, and quantities $L_m$, $G_m$. Let us write down those that we shall use later
\begin{equation}
\begin{gathered}
\label{Nparticle_relations1}
a_{k,t}=-\frac{p_t}{p_z},\quad a_{k,tt}=\frac{2p_tp_{tz}}{p_z^2}-\frac{p_t^2p_{zz}}{p_z^3}-\frac{p_{tt}}{p_z},\quad G_1=\frac{p_tp_{zz}}{2p_z^2}-\frac{p_{tz}}{p_z}+\frac{c_{0,t}}{c_0},\\
L_1=\frac{p_{zz}}{2p_z},\quad L_2=\frac{p_{zz}^2}{4p_z^2}-\frac{p_{zzz}}{3p_z},\quad L_3=\frac{p_{zzzz}}{8p_z}-\frac{p_{zz}p_{zzz}}{4p_z^2}
+\frac{p_{zz}^3}{8p_z^3}.\hfill
\end{gathered}
\end{equation}
In expressions \eqref{Nparticle_relations1} all the derivatives of the polynomial $p(z,t)$ are taken at its root $a_k$. Obtained relations are of rather general character. They are valid for any polynomial with simple roots.

Further let us note that any derivative of the polynomial $p(z,t)$ at its root $a_k$ divided by $(p_z)_{z=a_k}$ can be expressed through the quantities $a_{k,t}$, $a_{k,tt}$, $L_m$, $G_m$ (and their analogues arising in the case one wishes to find $p_{ttt}$, $p_{zzt}$ etc). Such relations can be obtained expressing step by step the corresponding derivatives from the equalities \eqref{Nparticle_relations1}. For further purposes let us give several of them
\begin{equation}
\begin{gathered}
\label{Nparticle_relations2}
p_t=-a_{k,t}p_z,\, p_{tz}=-(G_1+a_{k,t}L_1-\{\log c_0\}_t)p_z,\, p_{tt}=(2a_{k,t}[G_1-\{\log c_0\}_t]\hfill \\
-a_{k,tt})p_z,\quad
p_{zz}=2L_1p_z,\quad p_{zzz}=3(L_1^2-L_2)p_z,\quad p_{zzzz}=4(2L_3-3L_1L_2+L_1^3)p_z.
\end{gathered}
\end{equation}
Here again all the derivatives of the polynomial $p(z,t)$ are taken at its root $a_k$. In addition we have
\begin{equation}
\begin{gathered}
\label{Nparticle_relations3}
\{p_z\}_{z=a_k}=\prod_{j=1,j
\neq k}^{M}\{a_k-a_j\}
\end{gathered}
\end{equation}

Now let us describe the polynomial method. In this section we shall mainly address  the first problem from those stated in the introduction.
Consider a  partial differential equation
\begin{equation}
\begin{gathered}
\label{Nparticle_PDE_llinear}
E(t,z,p_z,p_t,p_{tz},p_{tt},p_{zz},\ldots)=0,
\end{gathered}
\end{equation}
where  $E$ is a polynomial in $z$, $p(z,t)$ and its derivatives. Here and in what follows we suppose that all the coefficient functions of partial differential equations and dynamical systems are well-behaved functions of the parameter $t$ (i.e. entire or meromorphic).

Suppose that a polynomial $p(z,t)$ with simple roots solves this equation; then substituting $z=a_k$ and relations of the form \eqref{Nparticle_relations2} into equation \eqref{Nparticle_PDE_llinear}, one obtains dynamical equations satisfied by the roots of the polynomial $p(z,t)$. Thus the $M$ zeros of the polynomial $p(z,t)$ are interpreted as the coordinates of $M$ point particles.

The strength of the polynomial method lies in the fact that it enables one to restore original equation \eqref{Nparticle_PDE_llinear}. Indeed, introducing a polynomial $p(z,t)$ with the roots at particle positions and substituting equalities of the form \eqref{Nparticle_relations1} into the equations of motion and getting rid of the denominators, one arrives at the following relations
\begin{equation}
\begin{gathered}
\label{Nparticle_calculations2}
\left\{F(t,z,p_z,p_t,p_{tz},p_{tt},p_{zz},\ldots)\right\}_{z=a_k}=0,\quad k=1 \ldots M,
\end{gathered}
\end{equation}
where $F$ is a polynomial in $z$, $p$, its derivatives and, consequently, a polynomial in $z$. This polynomial possesses $M$ roots $a_1$, $\ldots$, $a_M$. Thus we conclude that
\begin{equation}
\begin{gathered}
\label{Nparticle_calculations3}
F(t,z,p_z,p_t,p_{tz},p_{tt},p_{zz},\ldots)-P(z,t)p=0.
\end{gathered}
\end{equation}
In this relation $P(z,t)$ is a polynomial in $z$ such that $\deg P=\deg F- M$. If $\deg F<M$ then $P(z,t)\equiv0$. In nonlinear cases the polynomial $P$ may depend on $p(z,t)$ and its derivatives. At this step  equation \eqref{Nparticle_calculations3} may appear to be more general than the original equation  \eqref{Nparticle_PDE_llinear}. In other words, equations \eqref{Nparticle_PDE_llinear}, \eqref{Nparticle_calculations3} coincide accurate to the polynomial $P(z,t)$. If we wish to identify the polynomial $P(z,t)$ and to establish a correspondence between the polynomial solution $p(z,t)$ of equation \eqref{Nparticle_PDE_llinear} and a dynamical system, then obtained dynamical equations may need additional constrains. In order to find them one can, for example, repeat the described procedure taking the differential consequences of original equation \eqref{Nparticle_PDE_llinear}. The complete list of this constrains can be derived in the following way. Making the substitutions $p_z=up$, $p_t=vp$ into equation \eqref{Nparticle_PDE_llinear} gives a partial differential equation with dependent variables $p(z,t)$, $u(z,t)$ and $v(z,t)$ satisfying in addition the equation $u_t=v_z$. If $p(z,t)$ is a polynomial with simple roots, then the functions $u(z,t)$, $v(z,t)$ are given by
\begin{equation}
\begin{gathered}
\label{Nparticle_PDE_v_w}
u(z,t)=\sum_{j=1}^{M}\frac{1}{z-a_j},\quad v(z,t)=\sum_{j=1}^{M}\frac{(-a_{j,t})}{z-a_j}+\frac{c_{0,t}}{c_0}.
\end{gathered}
\end{equation}
Calculating the generalized Laurent series in a neighborhood of the pole $a_k$ for the functions $u(z,t)$, $v(z,t)$  yields
\begin{equation}
\begin{gathered}
\label{Nparticle_PDE_v_w}
u(z,t)=\frac{1}{z-a_k}+\sum_{m=0}^{\infty}(-1)^mL_{m+1}\{z-a_k\}^m,\quad z\rightarrow a_k\hfill \\
v(z,t)=\frac{(-a_{k,t})}{z-a_k}+\frac{c_{0,t}}{c_0}+\sum_{m=0}^{\infty}(-1)^{m+1}G_{m+1}\{z-a_k\}^m,\quad z\rightarrow a_k.
\end{gathered}
\end{equation}
The  generalized Laurent series in a neighborhood of infinity are the following
\begin{equation}
\begin{gathered}
\label{Nparticle_PDE_v_w_infinity}
u(z,t)=\frac{M}{z}+\sum_{m=1}^{\infty}\left\{\sum_{j=1}^{M}a_j^m\right\}z^{-m-1},\quad z\rightarrow \infty \hfill\\
v(z,t)=\frac{c_{0,t}}{c_0}-\sum_{m=0}^{\infty}\left\{\sum_{j=1}^{M}a_{j,t}a_j^m\right\}z^{-m-1},\quad z\rightarrow \infty.\hfill
\end{gathered}
\end{equation}
Substituting  series  \eqref{Nparticle_PDE_v_w}, \eqref{Nparticle_PDE_v_w_infinity},
\begin{equation}
\begin{gathered}
\label{Nparticle_PDE_p_series}
p(z,t)=\{p_z\}_{z=a_k}(z-a_k)+\frac{\{p_{zz}\}_{z=a_k}}{2}(z-a_k)^2+\ldots+(z-a_k)^M,\quad z\rightarrow a_k,
\end{gathered}
\end{equation}
 and  relation \eqref{Polynomial_general_coeff}, which is in fact the generalized Laurent series of the polynomial $p(z,t)$ in a neighborhood of infinity, into the partial differential equation relating $u$, $v$, $p$ and setting to zero the corresponding  coefficients at negative and zero powers of $\{z-a_k\}$, $k=1$, $\ldots$, $M$ and $z^{-1}$  gives the desired  system. Indeed, the left--hand side of the equation relating $u$, $v$, $p$ is a rational function without poles provided that this system is satisfied. From the Liouville theorem it immediately follows that such a function is a constant, which equals zero since the coefficients at  zero powers of $\{z-a_k\}$, $k=1$, $\ldots$, $M$, $z$ vanish.   Note that one may take only one correlation at the zero level $\{z-a_k\}^0$, $k=1$, $\ldots$, $M$, $z^{0}$. The coefficients $\{c_m(t)\}$ of the polynomial $p(z,t)$ are expressible via the dynamical variables $a_1(t)$, $\ldots$, $a_m(t)$ as follows
 \begin{equation}
\begin{gathered}
\label{Sym_Pol_Sym_Func_Coeff_non_monic}
c_m=(-1)^mS_mc_0,\quad m=1 \ldots M,
\end{gathered}
\end{equation}
where $S_m$ are the elementary symmetric functions, see formulae \eqref{Sym_Pol_Elem_Sym_Func} of section \ref{Symmetric_DS}.

As soon as the dynamical system is supplemented with additional constrains, then they can be used to identify the polynomial $P(z,t)$ in \eqref{Nparticle_calculations3}. For this aim one rewrites these constrainers via $\{p_z\}_{z=a_k}$, $\{p_{t}\}_{z=a_k}$, etc., differentiates  equation \eqref{Nparticle_calculations3} with respect to $z$, and substitutes the corresponding derivatives expressed from the additional constrains into the resulting equation (see example below).

For more details on constructing a partial differential equation related to a polynomial multi--particle dynamical system see section \ref{Symmetric_DS}.

Interestingly, the similar algorithm can be used to relate rational solutions of a partial differential equation and a dynamical system obeyed by the poles of its rational solutions.


Let us consider several examples. It is known that the heat equation
\begin{equation}
\begin{gathered}
\label{Nparticle_heat_PDE}
p_t-p_{zz}=0.
\end{gathered}
\end{equation}
possesses an infinite series of monic polynomial solutions, the so--called heat polynomials. Substituting relations \eqref{Nparticle_relations2} into equation \eqref{Nparticle_heat_PDE} we immediately get $a_{k,t}=-2L_1$ or explicitly
\begin{equation}
\begin{gathered}
\label{Nparticle_heat_DS}
a_{k,t}=-2\sum_{j=1,j\neq k}^{M}\frac{1}{a_k-a_j},\quad k=1 \ldots M.
\end{gathered}
\end{equation}
If a starting point is the system \eqref{Nparticle_heat_DS} then introducing a monic polynomial $p(z,t)$ with roots at the particle positions and using equalities \eqref{Nparticle_relations1} we obtain
\begin{equation}
\begin{gathered}
\label{Nparticle_heat_Calculations1}
\{p_t-p_{zz}\}_{z=a_k}=0.
\end{gathered}
\end{equation}
The following inequality $\deg(p_t-p_{zz})$ $<$ $M$ is valid whenever $p(z,t)$ is a monic polynomial. Consequently, the polynomial $p(z,t)$ satisfies the heat equation. In addition, we see that system \eqref{Nparticle_heat_DS} does not need any additional constraint (with the only exception for initial conditions). If all the functions $a_1(t)$, $\ldots$, $a_M(t)$ are real then the corresponding system describes dynamics of identical point vortices on a line.

As an illustrative nonlinear example let us take the following bilinear partial differential equation
\begin{equation}
\begin{gathered}
\label{Nparticle_KdV_bilinear}
pp_{tz}-p_zp_t+pp_{zzzz}-4p_zp_{zzz}+3p_{zz}^2=0.
\end{gathered}
\end{equation}
Each polynomial from the sequence of the Adler--Moser polynomials is a monic polynomial solution of this equation. For some properties of the Adler--Moser polynomials see \cite{Airault04, Adler01, Demina31, Demina32}. Substituting relations \eqref{Nparticle_relations2} into equation \eqref{Nparticle_KdV_bilinear} we obtain the following multi--particle dynamical equations
\begin{equation}
\begin{gathered}
\label{Nparticle_KdV_DS}
a_{k,t}=-12\sum_{j=1,j\neq k}^{M}\frac{1}{\left\{a_k-a_j\right\}^2},\quad k=1 \ldots M.
\end{gathered}
\end{equation}
Our goal is to establish a correspondence between monic polynomials with simple roots that satisfy equation \eqref{Nparticle_KdV_bilinear} and a dynamical system. Substituting $p_z=up$, $p_t=vp$ into equation \eqref{Nparticle_KdV_bilinear}, we obtain
\begin{equation}
\begin{gathered}
\label{Nparticle_KdV_bilinear_after_tr}
u_t+6u_z^2+u_{zzz}=0.
\end{gathered}
\end{equation}
Note that differentiating this equation with respect to $z$ and introducing the new variable $\tilde{u}=2u_z$ yields the Korteweg -- de Vries equation
\begin{equation}
\begin{gathered}
\label{Nparticle_KdV}
\tilde{u}_t+6\tilde{u}\tilde{u}_z+\tilde{u}_{zzz}=0.
\end{gathered}
\end{equation}
Substituting series \eqref{Nparticle_PDE_v_w} into equation \eqref{Nparticle_KdV_bilinear_after_tr} and setting to zero coefficients at $\{z-a_k\}^{-2}$ and $\{z-a_k\}^{-1}$ gives \eqref{Nparticle_KdV_DS} and  constrains of the form
\begin{equation}
\begin{gathered}
\label{Nparticle_KdV_DS_constrains}
\sum_{j=1,j\neq k}^{M}\frac{1}{\left\{a_k-a_j\right\}^3}=0,\quad k=1 \ldots M.
\end{gathered}
\end{equation}
An additional equation at the level $z^{0}$ is automatically satisfied. Originally dynamical system \eqref{Nparticle_KdV_DS}, \eqref{Nparticle_KdV_DS_constrains} was found by Airault, McKean, and Moser \cite{Airault04}. Note that equations \eqref{Nparticle_KdV_DS}, \eqref{Nparticle_KdV_DS_constrains} are compatible provided that $M$ is a triangular number \cite{Airault04}.

Now let us construct bilinear equation \eqref{Nparticle_KdV_bilinear} originating from system \eqref{Nparticle_KdV_DS}, \eqref{Nparticle_KdV_DS_constrains}. Introducing a monic polynomial $p(z,t)$ with roots at the particle positions and making use of relations \eqref{Nparticle_relations1} we get
\begin{equation}
\begin{gathered}
\label{Nparticle_KdV_Calculations1}
\left\{p_tp_z+4p_zp_{zzz}-3p_{zz}^2\right\}_{z=a_k}=0,\quad k=1 \ldots M;\hfill\\
 \left\{p_z^2p_{zzzz}-2p_zp_{zz}p_{zzz}+p_{zz}^3\right\}_{z=a_k}=0,\quad k=1 \ldots M.
\end{gathered}
\end{equation}
From the first set of these relations we obtain
\begin{equation}
\begin{gathered}
\label{Nparticle_KdV_Calculations2}
p_tp_z+4p_zp_{zzz}-3p_{zz}^2-P(z,t)p=0,
\end{gathered}
\end{equation}
where $P(z,t)$ is a polynomial in $z$ of degree $M-2$. Differentiating this equation with respect to $z$ and setting $z=a_k$ yields
\begin{equation}
\begin{gathered}
\label{Nparticle_KdV_Calculations2}
\left\{p_{tz}p_z+p_tp_{zz}+4p_zp_{zzzz}-2p_{zz}p_{zzz}-P(z,t)p_z\right\}_{z=a_k}=0,
\end{gathered}
\end{equation}
Our goal is to create  an expression with a common multiplier $p_z$. Consequently, we express $p_t$
from  the first set of relations in \eqref{Nparticle_KdV_Calculations1} and find $p_{zz}^3$ from the second set of relations in \eqref{Nparticle_KdV_Calculations1} and substitute the results into expressions \eqref{Nparticle_KdV_Calculations2} to obtain
\begin{equation}
\begin{gathered}
\label{Nparticle_KdV_Calculations3}
\left\{[p_{tz}+p_{zzzz}-P(z,t)]p_z\right\}_{z=a_k}=0,
\end{gathered}
\end{equation}
The polynomial $p_{tz}+p_{zzzz}-P(z,t)$ is of degree $M-2$ and possesses $M$ roots $a_1$, $\ldots$, $a_M$. Thus this polynomial identically equals zero and we conclude that
\begin{equation}
\begin{gathered}
\label{Nparticle_KdV_Calculations4}
P(z,t)=p_{tz}+p_{zzzz}.
\end{gathered}
\end{equation}
This completes the derivation of equation \eqref{Nparticle_KdV_bilinear}.

Finishing this section let us note that all our constructions are valid provided that $a_j(t)\not\equiv a_k(t)$, $j\neq k$. However, we do not exclude the cases when there exists an isolated point $t=t_0$ such that $a_j(t_0)=a_k(t_0)$. This coincidence   gives rise to collisions of particles. In order to derive dynamical systems we perform the polynomial method in domains of the complex plane $t$, where  $a_j(t)\neq a_k(t)$ and further we use the principle establishing uniqueness of analytic continuation.

\section{Polynomial multi--particle dynamical systems} \label{Symmetric_DS}

In this section we shall originate with a multi--particle dynamical system and study the problem of finding a partial differential equation of degree less than the amount of particles such that the monic polynomial having  roots coinciding with
the particle positions in the plane satisfies this equation. Note that if a starting point is a multi--particle dynamical system, then we do not need to introduce non--monic polynomials.  Let us consider the following dynamical system
\begin{equation}
\begin{gathered}
\label{Sym_Pol_DS}
R(a_{k,tt},a_{k,t},a_k;a_1,\ldots,a_{k-1},a_{k+1},\ldots,a_M)=0,\quad k=1 \ldots M,
\end{gathered}
\end{equation}
where the function $R$ is a polynomial of its arguments with, possibly, $t$--dependent coefficients. In addition suppose that $R$ is symmetric with respect to the variables $a_1$, $\ldots$, $a_{k-1}$, $a_{k+1}$, $\ldots$, $a_M$. In the case $a_{k}\not \equiv a_{j}$, $k\neq j$ the system \eqref{Sym_Pol_DS} can be regarded as a multi--particle dynamical system  and the complex--valued functions $a_1(t)$, $\ldots$, $a_M(t)$ can be interpreted as particle positions in the plane. In what follows we shall call such a system polynomial multi--particle dynamical system. Note that we restrict ourselves we the first--order and second--order dynamical systems, since such systems are of great practical importance. While the polynomial method is applicable to polynomial multi--particle dynamical systems of arbitrary order.

It is known that any symmetric polynomial of $M-1$ variables  $a_1$, $\ldots$, $a_{k-1}$, $a_{k+1}$, $\ldots$, $a_M$ can be represented as the polynomial in the following symmetric functions
\begin{equation}
\begin{gathered}
\label{Sym_Pol_Sym_Func_Rest}
s_m^{\,'}=\sum_{j=1,\,j\neq k}^{M}a_j^m,\quad m\in\mathbb{N^{+}}
\end{gathered}
\end{equation}
This representation is unique and involves finite amount of these functions. Let us introduce a monic polynomial $p(z,t)$ with roots at the particle positions, see \eqref{Polynomial_general_coeff}, \eqref{Polynomial_general_roots}
 at $c_M(t)\equiv1$. Coefficients $c_1$, $\ldots$, $c_M$ of the polynomial $p(z,t)$  are symmetric polynomials with respect to the variables $a_1$, $\ldots$, $a_M$. Indeed,
\begin{equation}
\begin{gathered}
\label{Sym_Pol_Sym_Func_Coeff}
c_m=(-1)^mS_m,\quad m=1 \ldots M,
\end{gathered}
\end{equation}
where the  elementary symmetric functions $S_m$ are given by
\begin{equation}
\begin{gathered}
\label{Sym_Pol_Elem_Sym_Func}
S_1=a_1+a_2+\ldots +a_M,\quad S_2=a_1a_2+a_1a_3+\ldots,\\
 S_3=a_1a_2a_3+a_1a_2a_4+\ldots,\quad S_M=a_1a_2\ldots a_M \hfill
\end{gathered}
\end{equation}
In other words $S_m$ is the sum of all $C_M^m$ products, containing $m$ factors $a_j$ with distinct indices each. Equalities \eqref{Sym_Pol_Sym_Func_Coeff} should be replaced by \eqref{Sym_Pol_Sym_Func_Coeff_non_monic} whenever one wishes to consider the non--monic case.

It can be easily proved by induction that all the elementary symmetric functions $S_1$, $\ldots$, $S_M$ can be expressed via $a_k$ and the derivatives
\begin{equation}
\begin{gathered}
\label{Sym_Pol_Calc1}
\left\{\frac{\partial\, p}{\partial z}\right\}_{z=a_k},\left\{\frac{\partial^2\, p}{\partial z^2}\right\}_{z=a_k},\ldots,\quad \left\{\frac{\partial^{M-1}\, p}{\partial z^{M-1}}\right\}_{z=a_k}.
\end{gathered}
\end{equation}
Calculating the $z$--derivatives of the polynomial $p(z,t)$ and setting $z=a_k$, we get
\begin{equation}
\begin{gathered}
\label{Sym_Pol_Calc2}
 \left\{\frac{\partial^{m}\, p}{\partial z^{m}}\right\}_{z=a_k}=\frac{M!}{(M-m)!}a_k^{M-m}+\frac{(M-1)!}{(M-m-1)!}c_{1}a_k^{M-m-1}+\ldots\\
 +m!c_{M-m}, m=1 \ldots M.
\end{gathered}
\end{equation}
Relation \eqref{Sym_Pol_Calc2} at $m=M-1$ can be solved with respect to $c_{1}$. This gives
\begin{equation}
\begin{gathered}
\label{Sym_Pol_c_M1}
c_{1}=\frac{1}{(M-1)!} \left\{\frac{\partial^{M-1}\, p}{\partial z^{M-1}}\right\}_{z=a_k}-Ma_k.
\end{gathered}
\end{equation}
Analogously, solving relation \eqref{Sym_Pol_Calc2} at $m=M-2$ with respect to $c_{2}$ yields
\begin{equation}
\begin{gathered}
\label{Sym_Pol_c_M2}
c_{2}=\frac{1}{(M-2)!} \left\{\frac{\partial^{M-2}\, p}{\partial z^{M-2}}\right\}_{z=a_k}-\frac{M(M-1)}{2}a_k^2-(M-1)a_kc_{1}.
\end{gathered}
\end{equation}
Further, we substitute expression \eqref{Sym_Pol_c_M1} into equality \eqref{Sym_Pol_c_M2}. We solve relation \eqref{Sym_Pol_Calc2} at $m=M-l$ with respect to $c_{l}$. The coefficient $c_M$ is given by
 \begin{equation}
\begin{gathered}
\label{Sym_Pol_c_M3}
c_{M}=-\left\{a_k^M+c_1 a_k^{M-1}+\ldots+c_{M-1}a_k\right\}.
\end{gathered}
\end{equation}
With the help of expressions \eqref{Sym_Pol_Sym_Func_Coeff}, \eqref{Sym_Pol_Calc2}, \eqref{Sym_Pol_c_M3} we can calculate all the functions $S_1$, $\ldots$, $S_M$. As soon as these functions are known, we use the Newton formulae
\begin{equation}
\begin{gathered}
\label{Sym_Pol_Newton_Formulae}
s_m-s_{m-1}S_1+s_{m-2}S_2-\ldots +(-1)^{m-1}s_1S_{m-1}+(-1)^mmS_m=0,\quad 1\leq m\leq M;\hfill\\
s_m-s_{m-1}S_1+s_{m-2}S_2-\ldots +(-1)^{M}s_{m-M}S_{M}=0,\quad m>M\hfill
\end{gathered}
\end{equation}
to obtain the  symmetric functions
\begin{equation}
\begin{gathered}
\label{Sym_Pol_Sym_Func_small}
s_m=\sum_{j=1}^{M}a_j^m,\quad m\in\mathbb{N}.
\end{gathered}
\end{equation}
Consequently, the  symmetric functions $s_m^{\,'}=s_m-a_k^m$ of $M-1$ variables $a_1$, $\ldots$, $a_{k-1}$, $a_{k+1}$, $\ldots$, $a_M$ are polynomially expressible via $a_k$ and the derivatives \eqref{Sym_Pol_Calc2}. As we have already mentioned dynamical equations \eqref{Sym_Pol_DS} can be rewritten in terms of symmetric polynomials~$s_m^{\,'}$:
 \begin{equation}
\begin{gathered}
\label{Sym_Pol_DS_Rewritten}
\tilde{R}(a_{k,tt},a_{k,t},a_k;\{s_m^{\,'}\})=0,\quad k=1 \ldots M,
\end{gathered}
\end{equation}
 Substituting expressions of the form \eqref{Sym_Pol_c_M1}, \eqref{Sym_Pol_c_M2} and relations for $a_{k,t}$, $a_{k,tt}$, see \eqref{Nparticle_calculations1}, into the resulting dynamical equations \eqref{Sym_Pol_DS_Rewritten} and getting read of the denominators, we obtain the identities
\begin{equation}
\begin{gathered}
\label{Sym_Pol_Calc3}
\{F(t,z,p,p_z,p_t,\ldots)\}_{z=a_k}=0,\quad k=1 \ldots M,
\end{gathered}
\end{equation}
where $F$ is a polynomial in $z$, $p(z,t)$, its derivatives and, consequently, a polynomial in $z$ with M roots $a_1$,  $\ldots$, $a_M$. As a result we get the following partial differential equation
\begin{equation}
\begin{gathered}
\label{Sym_Pol_Calc4}
F(t,z,p,p_z,p_t,\ldots)-P(z,t)p=0
\end{gathered}
\end{equation}
with $P$ being a polynomial in $z$ of degree: $\deg P=\deg F-M$. The converse result is also valid. Suppose that we originate with a partial differential equation \eqref{Sym_Pol_Calc4} and its polynomial solution $p(z,t)$, see \eqref{Polynomial_general_roots}
 with $c_M(t)\equiv1$. Setting $z=a_k$ in equation \eqref{Sym_Pol_Calc4}, we  substitute relations \eqref{Nparticle_relations2} (also see expressions  \eqref{Sym_Pol_Sym_Func_Coeff}, \eqref{Sym_Pol_Calc2}) into the resulting equality. This gives the symmetric dynamical system.

The same approach is applicable to polynomial dynamical systems depending symmetrically not only on the variables $a_1$, $\ldots$, $a_{k-1}$, $a_{k+1}$, $\ldots$, $a_M$ but also on the variables $a_{1,t}$, $\ldots$, $a_{k-1,t}$, $a_{k+1,t}$, $\ldots$, $a_{M,t}$ in such a way that the system can be rewritten in the form
\begin{equation}
\begin{gathered}
\label{Sym_Pol_DS_Rewritten_new}
\tilde{R}(a_{k,tt},a_{k,t},a_k;\{s_m^{\,'}\},\{s_{m,t}^{\,'}\})=0,\quad k=1 \ldots M,
\end{gathered}
\end{equation}
In order to express the functions $s_{m,t}^{\,'}$ via $a_k$, $a_{k,t}$ and the derivatives
\begin{equation}
\begin{gathered}
\label{Sym_Pol_Calc1_t}
\left\{\frac{\partial^2\, p}{\partial z \partial t}\right\}_{z=a_k},\left\{\frac{\partial^3\, p}{\partial z^2 \partial t}\right\}_{z=a_k},\ldots,\quad \left\{\frac{\partial^{M}\, p}{\partial z^{M-1}\partial t}\right\}_{z=a_k}
\end{gathered}
\end{equation}
we consider the polynomial $p_t$ and calculate its $z$ derivatives at the point $z=a_k$
\begin{equation}
\begin{gathered}
\label{Sym_Pol_Calc2_t}
 \left\{\frac{\partial^{m+1}\, p}{\partial z^{m}\partial t}\right\}_{z=a_k}=\frac{(M-1)!}{(M-m-1)!}c_{1,t}a_k^{M-m-1}+\ldots+m!c_{M-m,t},\quad  m=0\ldots M.
\end{gathered}
\end{equation}
Further, we step by step express the quantities $c_{1,t}$, $\ldots$, $c_{M,t}$ from these relations and differentiate the Newton formulae \eqref{Sym_Pol_Newton_Formulae} to find $s_{m,t}$ and $s_{m,t}^{\,'}=s_{m,t}-ma_k^{m-1}a_{k,t}$.

Thus, the polynomial method enables one to place the study of  polynomial multi--particle dynamical systems in the framework of the theory of partial differential equations. For a wide class of symmetric polynomial dynamical systems  this approach yields only one ordinary differential equation for a certain coefficient of the polynomial $p(z,t)$, see section \ref{Sec:MP23}. In this case the problem of integrating a symmetric polynomial dynamical system reduces to the problem of solving one ordinary differential equation and an $M$-th order algebraic equation.

If a polynomial dynamical system with the dependent variables  $\xi=(\xi_1,\ldots,\xi_M)^{T}$ is not of multi--particle type, then
one may look for an invertible transformation $\xi=B(a)$ making the system in variables $a=(a_1,\ldots,a_M)^{T}$ symmetric as in \eqref{Sym_Pol_DS_Rewritten} or \eqref{Sym_Pol_DS_Rewritten_new}.

In conclusion let us mention that  we have  studied  dynamical systems describing identical particle, i.e. particles possessing the same characteristics (such as mass, charge or circulation). The polynomial method is also applicable to systems of distinct particles. In the latter case one should divide the particles into groups according to the values of mass, charge, circulation etc. and introduce polynomials for each group separately \cite{Demina25,Demina26, Demina27, Demina28, Demina29, Demina30}. Along with this it can be seen that polynomials with multiple roots satisfying partial differential equations give rise to dynamical systems describing behavior of distinct particles.

\section{Multi--particle dynamical systems corresponding to linear partial differential equations} \label{Nparticle_linear}

The polynomial method of solving a polynomial multi--particle dynamical system consists in finding polynomial solutions of the corresponding partial differential equation.
In many cases obtaining polynomial solutions of a linear partial differential equations is easier than those of nonlinear equations, especially at large values of the parameter $M$.

In this section we shall construct and solve a number of multi--particle dynamical systems that originate from linear  partial differential equations. Restricting ourselves with two--particle interactions (in the case $M>2$) we  consider second--order equations
\begin{equation}
\begin{gathered}
\label{Nparticle_PDE_linear}
\alpha_{0,2}(z,t)p_{tt}+\alpha_{0,1}(z,t)p_{t}+\alpha_{1,1}(z,t)p_{tz}+\alpha_{2,0}(z,t)p_{zz}+\alpha_{1,0}(z,t)p_{z}+\alpha_{0,0}(z,t)p=0,
\end{gathered}
\end{equation}
where the coefficient functions $\{\alpha(z,t)\}$ are polynomials in $z$. Suppose that  a polynomial $p(z,t)$ with simple roots is a solution of this equation. Substituting relations \eqref{Nparticle_relations2} into  equation \eqref{Nparticle_PDE_linear} gives  the following multi--particle dynamical system
\begin{equation}
\begin{gathered}
\label{Nparticle_PDE_linear_DS}
\alpha_{0,2}(a_k,t)a_{k,tt}+\left[\alpha_{0,1}(a_k,t)+2\alpha_{0,2}(a_k,t)\{\log c_0\}_t\right]a_{k,t}=\alpha_{1,0}(a_k,t)-\alpha_{1,1}(a_k,t)\{\log c_0\}_t\\
+\sum_{j=1,j\neq k}^{M}\frac{1}{a_k-a_j}
\left\{[2\alpha_{0,2}(a_k,t)a_{k,t}-\alpha_{1,1}(a_k,t)]a_{j,t}
+[2\alpha_{2,0}(a_k,t)-\alpha_{1,1}(a_k,t)a_{k,t}]\right\},
\end{gathered}
\end{equation}
where $k=1\ldots M$. Conversely, starting from system \eqref{Nparticle_PDE_linear_DS} we introduce a polynomial $p(z,t)$ of degree $M$ with roots at the particle positions (see \eqref{Polynomial_general_coeff}).  By means of relations \eqref{Nparticle_relations1} we obtain equalities $F[t,z,p_z,p_t,p_{tz},p_{tt}$, $p_{zz}]_{z=a_k}=0$, $k=1\ldots M$, where the polynomial $F$ is given by
\begin{equation}
\begin{gathered}
\label{Nparticle_PDE_linear_Calculs1}
F=\alpha_{0,2}(z,t)p_{tt}+\alpha_{0,1}(z,t)p_{t}+\alpha_{1,1}(z,t)p_{tz}+\alpha_{2,0}(z,t)p_{zz}+\alpha_{1,0}(z,t)p_{z}.
\end{gathered}
\end{equation}
Thus we see that the polynomial $p(z,t)$ satisfies the equation
\begin{equation}
\begin{gathered}
\label{Nparticle_PDE_linear_Calculs2}
\alpha_{0,2}(z,t)p_{tt}+\alpha_{0,1}(z,t)p_{t}+\alpha_{1,1}(z,t)p_{tz}+\alpha_{2,0}(z,t)p_{zz}+\alpha_{1,0}(z,t)p_{z}-P(z,t)p=0,
\end{gathered}
\end{equation}
where $P(z,t)$ is a polynomial in $z$ of degree: $\deg P=\deg F-M$. Note that in the linear case the polynomial $P$ does not depend on $p$ and its derivatives. In
order to identify $P$ as $\alpha_{0,0}(z,t)$ system \eqref{Nparticle_PDE_linear_DS} should be supplied by additional constrains, which we derive as described in the previous section. This procedure is equivalent to substituting equality \eqref{Polynomial_general_coeff} into equation \eqref{Nparticle_PDE_linear} and setting to zero the coefficients at $z^{M+\deg \alpha_{0,0}}$, $\ldots$, $z^M$.

A necessary and sufficient condition for a polynomial $p(z,t)$ to satisfy equation \eqref{Nparticle_PDE_linear} or \eqref{Nparticle_PDE_llinear}
is existence of truncated Laurent series in a neighborhood of the points $z=0$ and $z=\infty$. In fact, these series coincide and are given by \eqref{Polynomial_general_coeff}. Substituting expression \eqref{Polynomial_general_coeff} into equation \eqref{Nparticle_PDE_linear} and setting to zero coefficients at different powers of $z$ one obtains a linear system for the coefficients $c_{0}(t)$, $\ldots$, $c_{M}(t)$. As a rule these equations are differential. As soon as a polynomial $p(z,t)$ with simple roots that solves equation \eqref{Nparticle_PDE_linear} is found it is an algebraic problem to obtain solutions of the corresponding dynamical system.

Let us consider several examples. The following linear partial differential equation
\begin{equation}
\begin{gathered}
\label{Nparticle_PDE_linear_Class_Ort_Pol}
\beta_2p_{tt}+\beta_1p_t+\sigma(z)p_{zz}+\tau(z)p_z+\lambda p=0
\end{gathered}
\end{equation}
with $\beta_1$, $\beta_2$, $\lambda$  being constants and $\sigma(z)$, $\tau(z)$ being polynomials such that $\deg \sigma \leq2$, $\deg \tau\leq1$ possesses stationary polynomial solutions given by classical orthogonal polynomials (of course, under appropriate choices of the parameter $\lambda$). Equation \eqref{Nparticle_PDE_linear_Class_Ort_Pol} necessarily admits polynomial solutions if the coefficient $c_M(t)$ satisfies the equation
\begin{equation}
\begin{gathered}
\label{Nparticle_PDE_linear_Class_Ort_Pol_c_M}
\beta_2c_{0,tt}+\beta_1c_{0,t}+\left(\frac{M(M-1)}{2}\sigma_{zz}+M\tau_{z}+\lambda\right)c_0=0
\end{gathered}
\end{equation}
This equation helps to find the polynomial $P(z,t)$ in expression \eqref{Nparticle_PDE_linear_Calculs2} and to establish a correspondence between polynomial solutions of equation \eqref{Nparticle_PDE_linear_Class_Ort_Pol} given by \eqref{Polynomial_general_roots} and the following multi--particle dynamical system
\begin{equation}
\begin{gathered}
\label{Nparticle_DS_Class_Ort_Pol}
\beta_2a_{k,tt}+[\beta_1+2\beta_2\{\log c_0\}_t]a_{k,t}=\tau(a_k)+2\sum_{j=1,j\neq k}^{M}\frac{\beta_2a_{k,t}a_{j,t}+\sigma(a_k)}{a_k-a_j},\quad k=1\ldots M.
\end{gathered}
\end{equation}
By $\{p_m(z)\}$ we denote a sequence of classical orthogonal polynomials satisfying the equation
\begin{equation}
\begin{gathered}
\label{Nparticle_Class_Ort_Pol_EQN}
\sigma(z)p_{m,zz}+\tau(z)p_{m,z}+\lambda_m p_m=0
\end{gathered}
\end{equation}
The polynomials $\{p_m(z)\}$ are orthogonal with respect to the weight function
\begin{equation}
\begin{gathered}
\label{Eqn_Determinants_Orthogonal_Polynomials_rho}
\rho_{cop}(z)=\frac{1}{\sigma}\exp\left[\int \frac{\tau}{\sigma} dz\right].
\end{gathered}
\end{equation}
on the real interval $[a,b]$, which may be infinite or half--infinite, see table \ref{T:COP}.

\begin{table}[t]
    \caption{Classical orthogonal polynomials.} \label{T:COP}
       \begin{tabular}[pos]{|l|c|c|c|c|c|c|}
                \hline
                              $ $ &   $p_m(z)$  & $ \rho_{cop}(z)$ & $[a,b]$ & $\sigma(z)$ & $\tau(z)$ & $\lambda_m$\\
                                 \hline
               Hermite & $H_m(z)$ & $ \exp (-z^2)$ & $(-\infty,+\infty)$ & $1$ & $-2z$ & $2m$\\
                \hline
                 Laguerre & $L_m^{(\alpha)}(z)$, $\alpha>-1$ & $z^{\alpha}\exp (-z)$ & $[0,+\infty)$ & $z$ & $\alpha+1-z$ & $m$\\
                \hline
                 Jacobi &  $P_m^{(\alpha,\beta)}(z)$, $\alpha>-1$ & $(1-z)^{\alpha}(1+z)^{\beta}$ & $[-1,1]$ & $1-z^2$ & $\beta-\alpha-(\alpha$ & $m(m+\alpha$\\
                $ $ &  $\beta>-1$ & $ $ & $ $ & $ $ & $ +\beta+2)z$ & $ +\beta+1)$\\
                \hline
        \end{tabular}
\end{table}

In these designations polynomial in $z$ solutions of equation \eqref{Nparticle_PDE_linear_Class_Ort_Pol} can be presented in the form
\begin{equation}
\begin{gathered}
\label{Nparticle_PDE_linear_Class_Ort_Pol_Sols1}
p(z,t)=\sum_{m=0}^Mb_m(t)p_m(z),
\end{gathered}
\end{equation}
where the coefficients $b_0(t)$, $\ldots$, $b_M(t)$ satisfy the following linear ordinary differential equations
\begin{equation}
\begin{gathered}
\label{Nparticle_PDE_linear_Class_Ort_Pol_Sols2}
\beta_2b_{m,tt}(t)+\beta_1b_{m,t}(t)+(\lambda-\lambda_m)b_m=0,\quad m=0\ldots M
\end{gathered}
\end{equation}
Let us solve these equations. If $\beta_2=0$, then we obtain
\begin{equation}
\begin{gathered}
\label{Nparticle_PDE_linear_Class_Ort_Pol_Calc1}
b_m(t)=C_m\exp\left[\frac{(\lambda_m-\lambda)t}{\beta_1}\right],\quad m=0\ldots M,
\end{gathered}
\end{equation}
where $\{C_m\}$ are arbitrary constants. In the case $\beta_2\neq0$ finding solutions of the quadratic equations
\begin{equation}
\begin{gathered}
\label{Nparticle_PDE_linear_Class_Ort_Pol_Calc2}
\beta_2\kappa^2+\beta_1\kappa+(\lambda-\lambda_m)=0,\quad m=0\ldots M,
\end{gathered}
\end{equation}
we get
\begin{equation}
\begin{gathered}
\label{Nparticle_PDE_linear_Class_Ort_Pol_Calc3}
\kappa_m^{(1)}\neq \kappa_m^{(2)}\quad \Rightarrow \quad b_m(t)=C_m^{(1)}\exp[\kappa_m^{(1)}t]+C_m^{(2)}\exp[\kappa_m^{(1)}t],\quad m=0\ldots M\\
\kappa_m^{(1)}= \kappa_m^{(2)}\quad \Rightarrow \quad b_m(t)=\left[C_m^{(1)}+C_m^{(2)}t\right]\exp[\kappa_m^{(1)}t],\quad m=0\ldots M,\hfill
\end{gathered}
\end{equation}
where again $\{C_m^{(1)}\}$, $\{C_m^{(2)}\}$ are arbitrary constants.

Stationary equilibria of dynamical system \eqref{Nparticle_DS_Class_Ort_Pol} is described by algebraic relations
\begin{equation}
\begin{gathered}
\label{Nparticle_DS_Stationary_Class_Ort_Pol}
\tau(a_k)+2\sigma(a_k)\sum_{j=1,j\neq k}^{M}\frac{1}{a_k-a_j}=0,\quad k=1\ldots M.
\end{gathered}
\end{equation}
The unique solution of this system is given by the roots of the classical orthogonal polynomial~$p_M(z)$. Indeed, it follows from our results that the variables $a_1$, $\ldots$, $a_M$
are solutions of this system if and only if the monic polynomial $p(z)$ with roots at the points $a_1$, $\ldots$, $a_M$ satisfies equation \eqref{Nparticle_Class_Ort_Pol_EQN} with $m=M$. The unique (up to a constant multiplier, which does not affect the rools) polynomial solution of the latter equation is $p_M(z)$. Note that algebraic system \eqref{Nparticle_DS_Stationary_Class_Ort_Pol} being considered in the complex plane possesses only real solutions.

Further let us study some other interesting examples involving classical orthogonal polynomials. It was proved in article \cite{Demina27} that the Wronskians $P(z)=W[p_{i_1},\ldots,p_{i_k},p_{i_{k+1}}]$, $Q(z)=W[p_{i_1},\ldots,p_{i_k}]$, where $i_1,\ldots,i_{k+1}$, $k\in\mathbb{N^{+}}$ is a sequence of pairwise different nonnegative integer numbers,  satisfy the following equation
\begin{equation}
\begin{gathered}
\label{Eqn_Determinants_Orthogonal_Polynomials}\sigma\left\{P_{zz}Q-2P_xQ_z+PQ_{zz}\right\}+\left[\tau+\left(k-\frac12\right)\sigma_z\right]\left\{P_xQ-PQ_z\right\}
+ \frac{\sigma_z}{2}\{P_xQ+PQ_z\}\\
+\left[\frac{k(k-1)}{2}\sigma_{zz}+k\tau_z+\lambda_{i_{k+1}}\right]PQ=0.
\end{gathered}
\end{equation}
In the case $k=0$ we set $P(z)=p_{i_1}$, $Q(z)=1$. For further purposes we need the following theorem.
\begin{theorem}\label{Teorem:Nparticle_Orth_Zeros}
Neither the polynomials $P(z)=W[p_{i_1},\ldots,p_{i_k},p_{i_{k+1}}]$, nor the polynomial $Q(z)=W[p_{i_1},\ldots,p_{i_k}]$  have common roots with the polynomial $\sigma(z)$.
\end{theorem}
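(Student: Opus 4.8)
The plan is to localize everything at the roots of $\sigma(z)$ and to exploit the fact that the defining equation \eqref{Nparticle_Class_Ort_Pol_EQN} degenerates there. Since $P$ and $Q$ are Wronskians of the $p_m$, their values at a point $z_0$ are determinants built from $p_m(z_0)$ and its derivatives; so I would first show that at any root $z_0$ of $\sigma$ every derivative $p_m^{(j)}(z_0)$ is a fixed scalar multiple of the single value $p_m(z_0)$, and then feed this rigid linear structure into the Wronskian determinant. For Hermite there is nothing to prove, as $\sigma\equiv1$; the only cases are Laguerre, with $z_0=0$, and Jacobi, with $z_0=\pm1$.

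First I would differentiate \eqref{Nparticle_Class_Ort_Pol_EQN} $n$ times by the Leibniz rule and set $z=z_0$. Because $\deg\sigma\le2$, $\deg\tau\le1$ and $\sigma(z_0)=0$, almost all terms drop and one is left with a two-term recurrence in which the coefficient of the top derivative $p_m^{(n+1)}(z_0)$ is $n\sigma_z(z_0)+\tau(z_0)$. The key preliminary check is that this coefficient never vanishes for $n\ge0$: for Laguerre it equals $n+\alpha+1$, and for Jacobi it equals $-2(n+\alpha+1)$ at $z_0=1$ and $2(n+\beta+1)$ at $z_0=-1$, all nonzero thanks to $\alpha,\beta>-1$. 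Solving the recurrence yields $p_m^{(j)}(z_0)=\mu_j(\lambda_m)\,p_m(z_0)$, where $\mu_j$ is a polynomial of degree exactly $j$ in the eigenvalue $\lambda_m$ with nonzero leading coefficient. As an immediate corollary $p_m(z_0)\ne0$, since otherwise all Taylor coefficients of $p_m$ at $z_0$ would vanish and $p_m$ would be identically zero.

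Next I would substitute this structure into the Wronskian. At $z=z_0$ the $c$-th column of the Wronskian matrix of $p_{i_1},\dots,p_{i_{k+1}}$ becomes $p_{i_c}(z_0)$ times the vector $\bigl(\mu_0(\lambda_{i_c}),\dots,\mu_k(\lambda_{i_c})\bigr)^{T}$, so that
\[
P(z_0)=\Bigl(\prod_{c=1}^{k+1}p_{i_c}(z_0)\Bigr)\det\bigl(\mu_r(\lambda_{i_c})\bigr)_{0\le r\le k,\;1\le c\le k+1}.
\]
Since $\mu_0,\dots,\mu_k$ have distinct degrees $0,\dots,k$ with nonzero leading coefficients, they form a triangular basis of the space of polynomials of degree $\le k$, and the determinant factors as the product of those leading coefficients times the ordinary Vandermonde determinant $\prod_{c<c'}(\lambda_{i_{c'}}-\lambda_{i_c})$.

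It then remains to verify that the $\lambda_{i_c}$ are pairwise distinct, which is the one genuinely case-dependent point and the main obstacle. For Laguerre $\lambda_m=m$ is trivially injective; for Jacobi the factorization $\lambda_{m_1}-\lambda_{m_2}=(m_1-m_2)(m_1+m_2+\alpha+\beta+1)$ shows that injectivity on the nonnegative integers could fail only if $\alpha+\beta=-(m_1+m_2+1)\le-2$, which is excluded by $\alpha,\beta>-1$. Hence the Vandermonde factor is nonzero, and together with $p_{i_c}(z_0)\ne0$ this gives $P(z_0)\ne0$. The identical argument applied to the $k$ polynomials $p_{i_1},\dots,p_{i_k}$, including the trivial case $Q\equiv1$ when $k=0$, gives $Q(z_0)\ne0$, which completes the proof.
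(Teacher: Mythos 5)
Your proof is correct, but it follows a genuinely different route from the paper's. The paper works with the bilinear relation \eqref{Eqn_Determinants_Orthogonal_Polynomials} satisfied by the pair $(P,Q)$ and argues by induction on $k$: assuming $Q(z_0)\neq 0$ at a root $z_0$ of $\sigma$, it substitutes local Taylor expansions into that relation and reads off only the lowest-order coefficient, obtaining the indicial-type condition $r\kappa_0\mu_r\{(r+k-1)\sigma_1+\tau_0\}=0$, which forces the vanishing order $r$ of $P$ at $z_0$ to be zero precisely because $\alpha,\beta>-1$; the base case is the classical fact that $p_{i_1}$ has all its roots inside the open interval of orthogonality. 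You instead go back to the second-order equation \eqref{Nparticle_Class_Ort_Pol_EQN} itself, derive the two-term Leibniz recurrence $\bigl(n\sigma_z(z_0)+\tau(z_0)\bigr)p_m^{(n+1)}(z_0)=-\bigl(\tbinom{n}{2}\sigma_{zz}+n\tau_z+\lambda_m\bigr)p_m^{(n)}(z_0)$, verify its leading coefficient never vanishes under $\alpha,\beta>-1$, and thereby collapse each column of the Wronskian at $z_0$ to a scalar multiple of $\bigl(\mu_0(\lambda_{i_c}),\dots,\mu_k(\lambda_{i_c})\bigr)^{T}$, giving the closed form $P(z_0)=\bigl(\prod_c p_{i_c}(z_0)\bigr)\cdot(\text{nonzero constant})\cdot\prod_{c<c'}(\lambda_{i_{c'}}-\lambda_{i_c})$. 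Your approach buys self-containedness and more information: it does not presuppose the cited relation \eqref{Eqn_Determinants_Orthogonal_Polynomials}, it reproves rather than cites the nonvanishing of $p_m$ at the roots of $\sigma$, and it yields an explicit nonzero value of $P(z_0)$ rather than a mere nonvanishing statement; the price is the extra (correctly handled) verification that the eigenvalues $\lambda_{i_c}$ are pairwise distinct, which for Jacobi again rests on $\alpha+\beta>-2$. The paper's induction is shorter given the machinery it already has in place, and localizes the role of the hypotheses $\alpha,\beta>-1$ in a single indicial coefficient.
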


\begin{proof}

First of all let us prove that if the polynomial $Q(z)$ does not have common roots with the polynomial $\sigma(z)$ then neither does the polynomial $P(z)$. Suppose $z=z_0$ is a root of the polynomial $\sigma(z)$. Substituting the Tailor series in a neighborhood of the point $z=z_0$:
\begin{equation}
\begin{gathered}
\label{Eqn_Determinants_Orthogonal_Polynomials_Calc1}\sigma(z)=\sigma_1(z-z_0)+\sigma_2(z-z_0)^2,\quad \tau(z)=\tau_0+\tau_1(z-z_0),\\
Q(z)=\kappa_0+\sum_{m=1}^{\deg Q}\kappa_m(z-z_0)^m,\quad P(z)=\mu_r(z-z_0)^r+\sum_{m=r+1}^{\deg P}\mu_m(z-z_0)^m,\\
\end{gathered}
\end{equation}
where $\kappa_0\neq 0$, $\mu_r\neq0$, $r\in\mathbb{N^+}$, into equation \eqref{Eqn_Determinants_Orthogonal_Polynomials} and setting the lowest--order coefficient to zero yields the equality
\begin{equation}
\begin{gathered}
\label{Eqn_Determinants_Orthogonal_Polynomials_Calc2}r\kappa_0\mu_r\left\{(r+k-1)\sigma_1+\tau_0\right\}=0.
\end{gathered}
\end{equation}
In the case of the Jacobi entries of the Wronskians we get
\begin{equation}
\begin{gathered}
\label{Eqn_Determinants_Orthogonal_Polynomials_Jacobi_Calc2}z_0=1,\quad \sigma_1=-2,\quad\tau_0=-2(\alpha+1)\quad \Rightarrow \quad   r\kappa_0\mu_r\left\{r+k+\alpha\right\}=0,\\
z_0=-1,\quad \sigma_1=2,\quad\tau_0=2(\beta+1)\quad \Rightarrow \quad   r\kappa_0\mu_r\left\{r+k+\beta\right\}=0.\hfill
\end{gathered}
\end{equation}
From  the conditions $\alpha>-1$, $\beta>-1$ it follows that $r=0$. In the case of the  Laguerre entries of the Wronskians we have
\begin{equation}
\begin{gathered}
\label{Eqn_Determinants_Orthogonal_Polynomials_Laguerre_Calc2}z_0=0,\quad \sigma_1=1,\quad\tau_0=\alpha+1\quad \Rightarrow \quad   r\kappa_0\mu_r\left\{r+k+\alpha\right\}=0.
\end{gathered}
\end{equation}
The condition $\alpha>-1$ gives $r=0$. Further we use induction over $k$. For $k=1$, there is nothing to prove since the classical orthogonal polynomial $p_{i_1}$ does not have common roots with the polynomial $\sigma(z)$. This completes the proof.

\end{proof}

Fixing the sequence $i_1,\ldots,i_{k}$ let us consider the polynomials $P_l(z)=W[p_{i_1},\ldots,p_{i_k},p_{l}]$, $l\in\mathbb{N^{+}}$ $\setminus$ $I$,  $I=\{i_1,\ldots,i_k\}$. It turns out that such sequences of polynomials form orthogonal systems.

\begin{theorem}\label{Teorem:Nparticle_Orth}
The polynomials $P_l(z)$, $l\in\mathbb{N^{+}}$ $\setminus$ $\{i_1,\ldots,i_k\}$ are orthogonal with respect to the weight--function $\tilde{\rho}(z)=(\sigma^k\rho_{cop})/q^2$ on any simple directed  smooth or piecewise smooth curve $\Gamma$ with  parametrization $\gamma(s):$ $[s_a,s_b] \rightarrow \mathbb{\overline{C}}$, $[s_a,s_b]\subseteq\mathbb{R}$ provided that $\Gamma$ avoids the zeros of the polynomial $q(z)=W[p_{i_1},\ldots,p_{i_k}]$, the integrals
\begin{equation}
\begin{gathered}
\label{Eqn_Determinants_Orthogonality_Polynomials_New_Calc1}
\int_{\Gamma}z^n\tilde{\rho}(z)dz,\quad n\in \mathbb{N^{+}}
\end{gathered}
\end{equation}
are finite, and the endpoints $a=\gamma(s_a)$, $b=\gamma(s_b)$ are chosen in such a way that
\begin{equation}
\begin{gathered}
\label{Eqn_Determinants_Orthogonality_Polynomials_New_Calc2}
\lim_{s\rightarrow s_a+0}\tilde{\rho}(\gamma(s))\sigma(\gamma(s))\gamma^n(s)=0,\quad  \lim_{s\rightarrow s_b-0}\tilde{\rho}(\gamma(s))\sigma(\gamma(s))\gamma^n(s)=0,\quad n\in \mathbb{N^{+}}.
\end{gathered}
\end{equation}
\end{theorem}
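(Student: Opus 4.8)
The plan is to convert the bilinear identity \eqref{Eqn_Determinants_Orthogonal_Polynomials} (with $p_{i_{k+1}}$ replaced by $p_l$, so that $\lambda_{i_{k+1}}$ becomes $\lambda_l$) into a single linear second--order ordinary differential equation for $P_l$ alone, and then to recognize this equation as a Sturm--Liouville problem whose weight is precisely $\tilde\rho$. Since $q=Q=W[p_{i_1},\ldots,p_{i_k}]$ is held fixed as $l$ varies, I would divide \eqref{Eqn_Determinants_Orthogonal_Polynomials} by $q$ (valid on $\Gamma$, which by hypothesis avoids the zeros of $q$) and collect the coefficients of $P_{l,zz}$, $P_{l,z}$ and $P_l$. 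A short computation gives an equation of the form $\sigma P_{l,zz}+A(z)P_{l,z}+B(z)P_l=-\lambda_l P_l$, where $A=\tau+k\sigma_z-2\sigma q_z/q$ and $B$ is built only from $\sigma$, $\tau$ and $q$ (hence independent of $l$). Thus every $P_l$ is an eigenfunction, with eigenvalue $-\lambda_l$, of one and the same operator $\mathcal L[y]=\sigma y_{zz}+Ay_z+By$.

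Next I would put $\mathcal L$ into self--adjoint form. The integrating weight $w$ is determined by $(w\sigma)_z=wA$, i.e. $w=\sigma^{-1}\exp\int (A/\sigma)\,dz$. Because $A/\sigma=\tau/\sigma+k\sigma_z/\sigma-2q_z/q$, one finds $\int(A/\sigma)\,dz=\int(\tau/\sigma)\,dz+k\log\sigma-2\log q$, and using $\rho_{cop}=\sigma^{-1}\exp\int(\tau/\sigma)\,dz$ from \eqref{Eqn_Determinants_Orthogonal_Polynomials_rho} this collapses to $w=\sigma^{k}\rho_{cop}/q^{2}=\tilde\rho$. Hence the claimed weight is exactly the Sturm--Liouville weight, and $w\mathcal L[y]=(\tilde\rho\sigma y_z)_z+\tilde\rho B y$.

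The orthogonality then follows from the Lagrange identity. For $l\neq m$ the antisymmetric combination cancels the $B$--term,
\begin{equation*}
\tilde\rho\bigl(P_m\mathcal L[P_l]-P_l\mathcal L[P_m]\bigr)=\frac{d}{dz}\bigl[\tilde\rho\sigma\,(P_mP_{l,z}-P_lP_{m,z})\bigr],
\end{equation*}
so integrating along $\Gamma$ and using $\mathcal L[P_l]=-\lambda_lP_l$ gives
\begin{equation*}
(\lambda_m-\lambda_l)\int_{\Gamma}P_lP_m\,\tilde\rho\,dz=\Bigl[\tilde\rho\sigma\,(P_mP_{l,z}-P_lP_{m,z})\Bigr]_{\gamma(s_a)}^{\gamma(s_b)}.
\end{equation*}
The finiteness hypothesis \eqref{Eqn_Determinants_Orthogonality_Polynomials_New_Calc1} guarantees convergence of the integral, while $P_mP_{l,z}-P_lP_{m,z}$ is a polynomial, so the boundary bracket is a finite linear combination of terms $\tilde\rho\sigma\gamma^{n}$ evaluated at the endpoints, each of which vanishes by \eqref{Eqn_Determinants_Orthogonality_Polynomials_New_Calc2}. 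Finally the eigenvalues $\lambda_l$ are pairwise distinct for distinct indices (immediate from the column $\lambda_m$ of Table \ref{T:COP}: $2m$, $m$, and $m(m+\alpha+\beta+1)$ with $\alpha,\beta>-1$), so $\lambda_m-\lambda_l\neq 0$ forces the integral to vanish, which is the asserted orthogonality.

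The role of Theorem \ref{Teorem:Nparticle_Orth_Zeros} is to make the endpoint analysis legitimate: the natural endpoints for the classical weights are the zeros of $\sigma$, and there $q$ does not vanish, so near them $\tilde\rho=\sigma^{k}\rho_{cop}/q^{2}$ inherits the integrability and decay of $\rho_{cop}$ and the limits in \eqref{Eqn_Determinants_Orthogonality_Polynomials_New_Calc2} are actually attainable. The only genuinely delicate points are therefore the extraction of the single ODE from the bilinear form and the verification that the boundary bracket vanishes on a general complex contour $\Gamma\subset\overline{\mathbb{C}}$ rather than on a real interval; once the weight is identified with $\tilde\rho$, both reduce to the hypotheses already imposed.
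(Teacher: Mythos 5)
Your proof is correct and follows essentially the same route as the paper: both convert the bilinear identity into a single second--order equation in Sturm--Liouville (self--adjoint) form, identify the weight as $\tilde\rho=\sigma^{k}\rho_{cop}/q^{2}$, and conclude via the Lagrange identity together with the endpoint conditions \eqref{Eqn_Determinants_Orthogonality_Polynomials_New_Calc2}. The only cosmetic difference is that the paper substitutes $P=q\psi$ and works with the rational functions $\psi_l=P_l/q$ and weight $\sigma^{k}\rho_{cop}$, whereas you keep $P_l$ itself and absorb the factor $q^{-2}$ into the weight from the outset; the boundary Wronskian terms coincide under this change of dependent variable, and your explicit remarks on the polynomial nature of the bracket and the distinctness of the $\lambda_l$ only make the argument slightly more complete than the paper's.
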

\textit{Remark 1.} In theorem \ref{Teorem:Nparticle_Orth} orthogonality is understood in the non--Hermitian sense
\begin{equation}
\begin{gathered}
\label{Eqn_Determinants_Orthogonality_Polynomials_New}
\int_{\Gamma}P_l(z)P_n(z)\tilde{\rho}(z)dz=0,\quad n\neq l
\end{gathered}
\end{equation}
unless the curve $\Gamma$ is an interval $[a,b]$, possibly infinite or half--infinite, of the real line and $\tilde{\rho}(z)dz$ is a positive measure on $[a,b]$.

\textit{Remark 2.} If the parameters $\alpha$, $\beta$ are not integers, then one should introduce cetrain cuts in the complex plane in order to ensure possibilities of choosing single valued branches of the functions $z^{\alpha}$, $(1-z)^{\alpha}$, $(1+z)^{\beta}$.

\textit{Remark 3.} We do not exclude the case of closed curves, i.e. $\gamma(s_a)=\gamma(s_b)$. For example, one can use this theorem to find polynomials orthogonal on a circle in the complex plane.

\begin{proof}
We begin the proof by observing that substituting $P=q\psi$, $Q=q$, $i_{k+1}=l$ into expression \eqref{Eqn_Determinants_Orthogonal_Polynomials} gives the following  linear second order equation for the rational function~$\psi(z)$:
\begin{equation}
\begin{gathered}
\label{Eqn_Determinants_Orthogonal_Polynomials_New}
\sigma\psi_{l,zz}+\left(\tau+k\sigma_z\right)\psi_{l,z}+\left(u_k+\lambda_{l}\right) \psi_l=0,
\end{gathered}
\end{equation}
where we have introduced notation
\begin{equation}
\begin{gathered}
\label{Eqn_Determinants_Orthogonal_Polynomials_New_u}
u_k(z)=k\tau_z+\frac{k(k-1)}{2}\sigma_{zz}+\sigma_z\{\log q\}_z+2\sigma\{\log q\}_{zz}.
\end{gathered}
\end{equation}
Further using the standard technic designed for equations satisfied by the sequences of classical orthogonal polynomials we rewrite equation \eqref{Eqn_Determinants_Orthogonal_Polynomials_New} in the form
\begin{equation}
\begin{gathered}
\label{Eqn_Determinants_Orthogonal_Polynomials_New_selfad}
\frac{d}{d z}\left(\sigma\rho \psi_{l,z}\right)+\left(u_k+\lambda_{l}\right)\rho \psi_l=0,
\end{gathered}
\end{equation}
where the function $\rho(z)$ is defined by the relation
\begin{equation}
\begin{gathered}
\label{Eqn_Determinants_Orthogonal_Polynomials_New_rho_eqn}
\frac{d}{d z}\left(\sigma\rho \right)=\left(\tau+k\sigma_z\right)\rho.
\end{gathered}
\end{equation}
Dividing this equality by $\sigma\rho$ and integrating the result, we get $\rho(z)=\sigma^k\rho_{cop}$.

Multiplying equation by $\psi_n$, $n\neq l$ yields
\begin{equation}
\begin{gathered}
\label{Eqn_Determinants_Orthogonal_Polynomials_New_Calc1}
\left(\sigma\rho \psi_{l,\,z}\right)_z\psi_n+\left(u_k+\lambda_{l}\right)\rho \psi_l\psi_n=0,
\end{gathered}
\end{equation}
Reversing the subscripts, subtracting one equality from the other, and integrating the result gives
\begin{equation}
\begin{gathered}
\label{Eqn_Determinants_Orthogonal_Polynomials_New_Calc2}
(\lambda_n-\lambda_l)\int_{\Gamma} \psi_n(z)\psi_l(z) \rho(z)dz=-\left\{\rho\sigma \frac{W[P_n,P_l]}{q^2}\right\}_{a}^{b},
\end{gathered}
\end{equation}
The path of integration $\Gamma$ is chosen as given in the statement of the theorem. It follows from relations \eqref{Eqn_Determinants_Orthogonality_Polynomials_New_Calc2} that the right--hand side of expression \eqref{Eqn_Determinants_Orthogonal_Polynomials_New_Calc2} equals zero. Consequently, the sequence of rational functions $\psi_l(z)$, $l\in\mathbb{N^{+}}$ $\setminus$ $\{i_1,\ldots,i_k\}$  is an orthogonal system with weight $\rho(z)$ on the curve $\Gamma$. Recalling the definition of the function $\psi_l(z)$, we obtain that the  polynomials $P_l(z)$, $l\in\mathbb{N^{+}}$ $\setminus$ $\{i_1,\ldots,i_k\}$ are orthogonal with weight $\tilde{\rho}(z)=\rho(z)/q(z)^2$ on the curve $\Gamma$. This completes the proof.
\end{proof}

Polynomials, whose orthogonality  we have proved in theorem \ref{Teorem:Nparticle_Orth}, seem to belong to the class of exceptional orthogonal polynomials (for definitions and review of properties see \cite{Gomez01, Gomez02}). If the polynomial $q(z)$ does not have real zeros, then the curve $\Gamma$ in orthogonality condition \eqref{Eqn_Determinants_Orthogonality_Polynomials_New} can be chosen as the real interval $[a,b]$ and the endpoints may be taken as in the classical case.

The degrees of the polynomials $P_l(z)=W[p_{i_1},\ldots,p_{i_k},p_{l}]$, $l\in\mathbb{N^{+}}$ $\setminus$ $\{i_1,\ldots,i_k\}$, $q=W[p_{i_1},\ldots,p_{i_k}]$ can be calculated finding the highest powers of the Wronskians. The result is
\begin{equation}
\begin{gathered}
\label{Exc_Ort_Pols_Degrees}
\deg P_l=\sum_{m=1}^{k}i_m+l-\frac{k(k+1)}{2},\quad \deg q=\sum_{m=1}^{k}i_m-\frac{k(k-1)}{2}.
\end{gathered}
\end{equation}
Note that  the following condition:  $\deg P_l=l$ is valid not for all sequences of  orthogonal polynomials.

As an example let us take the Laguerre entries of the Wronskians. The polynomial $W[L_1^{(\alpha)},L_2^{(\alpha)}]=-\{z^2-2(\alpha+1)z+(\alpha+1)(\alpha+2)\}/2$ does not have real zeros provided that the inequality $\alpha>-1$ is satisfied. Consequently, setting $QL=W[L_1^{(\alpha)},L_2^{(\alpha)}]$ we obtain a family of polynomials  $PL_m=W[L_1^{(\alpha)},L_2^{(\alpha)},L_m^{(\alpha)}]$, $m\in\mathbb{N^+}\setminus\{1,2\}$, $\alpha>-1$ orthogonal with the weight $\tilde{\rho}(z)=(z^{\alpha+2}\exp[-z])/q^2$ on the real interval $[0,+\infty)$:
\begin{equation}
\begin{gathered}
\label{Eqn_Determinants_Orthogonal_Polynomials_Example_Laguerre}
\int_{0}^{\infty}PL_l(z)PL_n(z)\frac{z^{\alpha+2}e^{-z}}{QL^2}dz=0,\quad n\neq l.
\end{gathered}
\end{equation}
According to relation \eqref{Exc_Ort_Pols_Degrees} we have $\deg PL_m=m$. Let us mention an interesting property of these polynomials. We observe that the polynomial $PL_m(z)$ possesses $m-\deg QL$ simple zeros on the interval of orthogonality. For several plots see figure \ref{F:LG_exceptional_OP}. Roots that are located on the curve of orthogonality are called regular.


\begin{figure}[t]
 \centerline{
 \subfigure[$q(z)$]{\epsfig{file=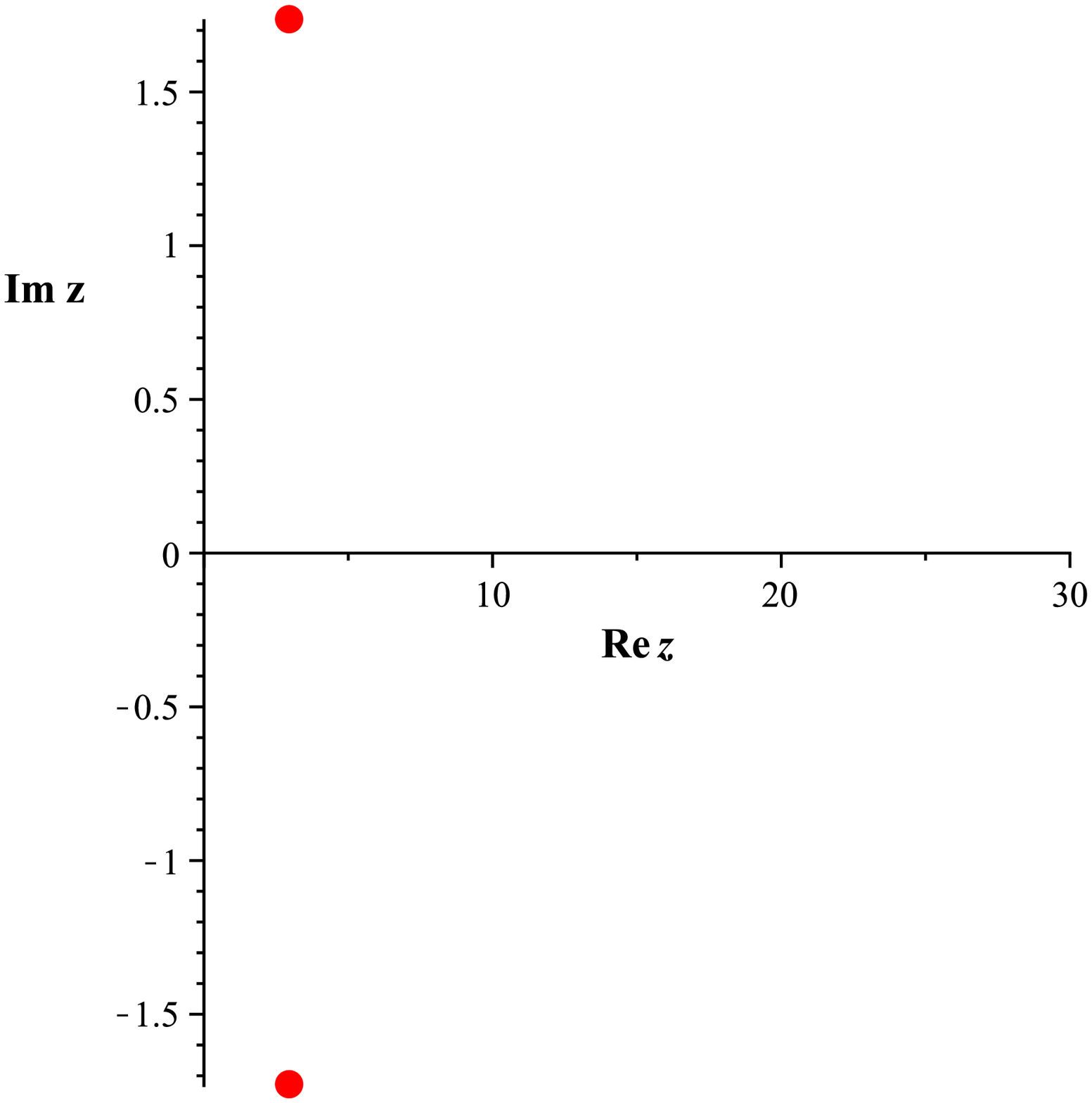,width=60mm}\label{}}
 \subfigure[$P_7(z)$]{\epsfig{file=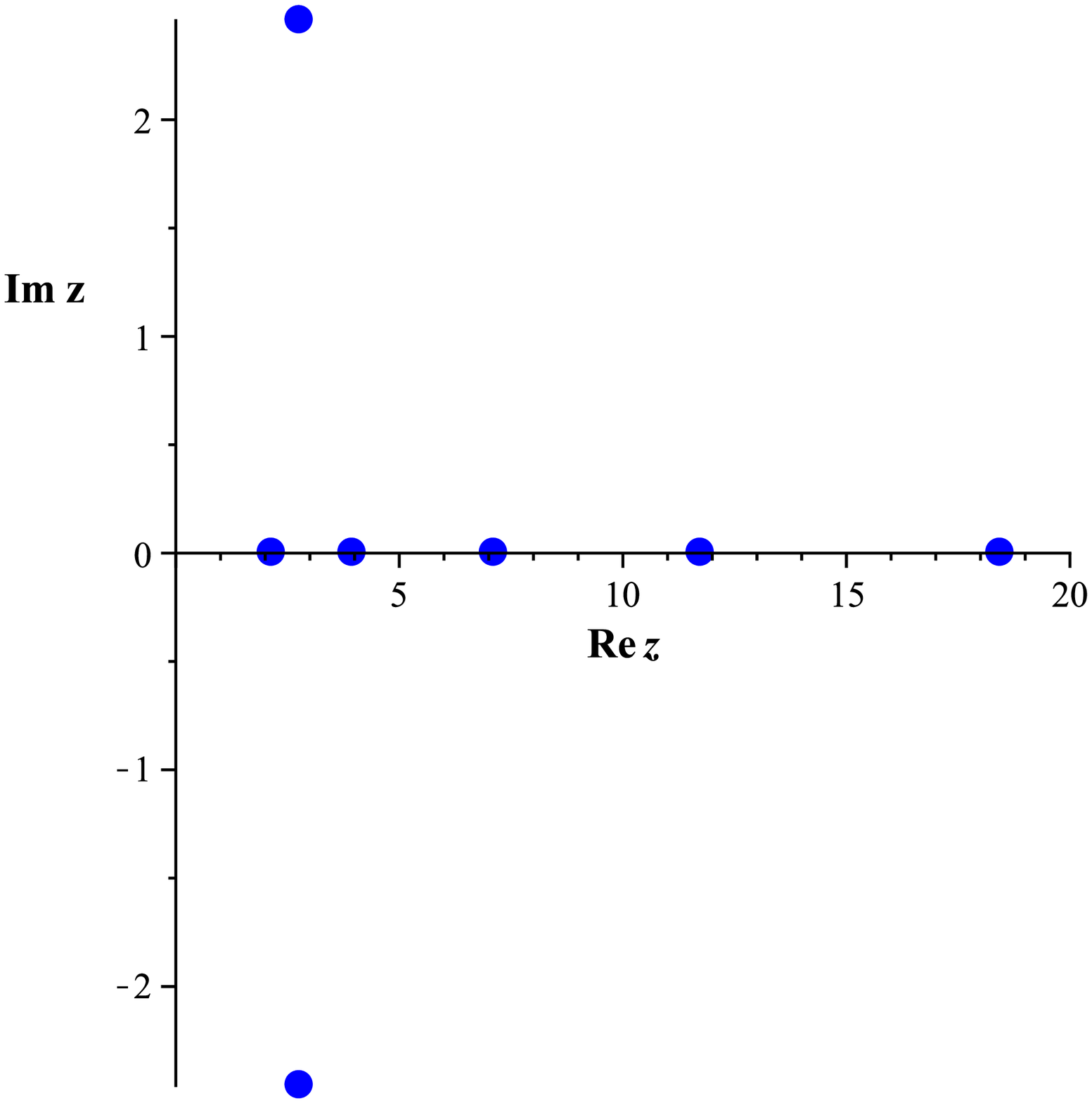,width=60mm}\label{}}
 \subfigure[$P_{12}(z)$]{\epsfig{file=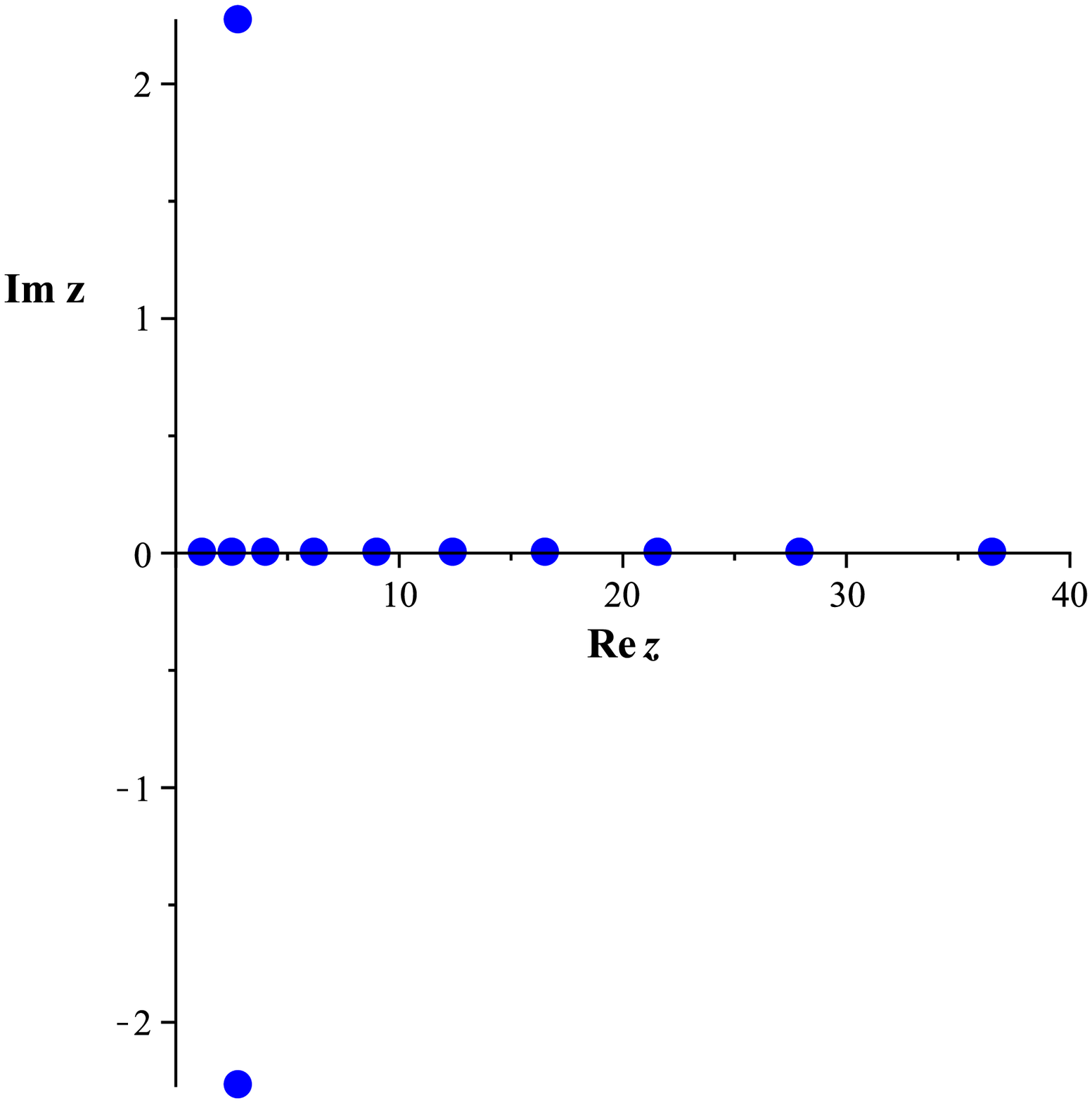,width=60mm}\label{}}}
    \caption{Roots of the polynomials $q(z)$, $\{P_m(z)\}$, $\alpha=2$.}
 \label{F:LG_exceptional_OP}
\end{figure}

If the degrees of the polynomials in coefficients of equation \eqref{Nparticle_PDE_linear} are the same as in expression \eqref{Nparticle_PDE_linear_Class_Ort_Pol}, i.e. $\deg\alpha_{0,2}=0$, $\deg\alpha_{0,1}=0$, $\deg\alpha_{0,0}=0$, $\deg\alpha_{1,1}\leq1$, $\deg\alpha_{1,0}\leq1$, $\deg\alpha_{2,0}\leq2$, then substituting equality \eqref{Polynomial_general_coeff} into equation \eqref{Nparticle_PDE_linear} yields $M+1$ linear second--order ordinary differential equations on $M+1$ coefficients $c_0(t)$, $\ldots$, $c_M(t)$. In principle, these equations can be solved step by step (exact resolution, of course, depends on the functions of $t$ involved in the coefficients $\{\alpha(z,t)\}$). However, if the polynomials in coefficients of equation \eqref{Nparticle_PDE_linear} are of higher degrees, then one obtains more than $M+1$ equations on $M+1$ coefficients $c_0(t)$, $\ldots$, $c_M(t)$. The problem to construct polynomial solutions of corresponding partial differential  equations (or to prove their existence) becomes more complicated.  As a result, it is not an easy task to find solutions of related dynamical systems (at least, by means of the polynomial method). We can use  orthogonal polynomials $P_l(z)$, $l\in\mathbb{N^{+}}$ $\setminus$ $\{i_1,\ldots,i_k\}$ to solve this problem whenever  the $z$--part of a partial differential equation coincides with the equation for the system of orthogonal polynomials in question and $\alpha_{0,2}=\beta_2 q$, $\alpha_{0,1}=\beta_1 q$.

Polynomials with simple roots that solve
the following linear partial differential equation
\begin{equation}
\begin{gathered}
\label{PDE_exeptional_OP}
q\{\beta_2p_{tt}+\beta_1p_t+\sigma p_{zz}\}+[(\tau+k\sigma_z)q-2\sigma q_z]p_z+\left[\sigma q_{zz}-(\tau+(k-1)\sigma_z)q_z \phantom{\frac{1}{2}}\right.\\
\left.+\left\{\frac{k(k-1)}{2}\sigma_{zz}+k\tau_z+\lambda\right\}q\right]p=0
\end{gathered}
\end{equation}
 give rise to the dynamical system of the form
\begin{equation}
\begin{gathered}
\label{Nparticle_DS_exeptional_OP}
q(a_k)\left\{\beta_2a_{k,tt}+[\beta_1+2\beta_2\{\log c_0\}_t]a_{k,t}\right\}=2q(a_k)\sum_{j=1,j\neq k}^{M}\frac{\beta_2a_{k,t}a_{j,t}+\sigma(a_k)}{a_k-a_j}\\
+\left\{\tau(a_k)+k\sigma_z(a_k)\right\}q(a_k)
-2\sigma(a_k)q_z(a_k)
,\quad k=1\ldots M.
\end{gathered}
\end{equation}
Suppose we wish to establish a correspondence between polynomials with simple roots that satisfy equation \eqref{PDE_exeptional_OP} and a dynamical system. It means that we should identify the polynomial $P(z,t)$ in \eqref{Nparticle_PDE_linear_Calculs2} as it is given in equation \eqref{PDE_exeptional_OP}. For this aim we substitute expression \eqref{Polynomial_general_coeff} into equation \eqref{PDE_exeptional_OP} and set to zero the coefficients at different powers of $z$. We need to take equations at $z^{M+\deg q}$, $\ldots$, $z^{M}$. This gives additional constrains on dynamical variables $a_1(t)$, $\ldots$, $a_M(t)$. For example, in the case $\sigma=z$, $\tau=\alpha+1-z$, $q=W[L_1^{(\alpha)},L_2^{(\alpha)}]$, $k=2$ we obtain
\begin{equation}
\begin{gathered}
\label{Nparticle_Exceptional_Laguerre}
\beta_2c_{0,tt}+\beta_1c_{0,t}+(\lambda-M)c_0=0,\hfill \\
\beta_2c_{1,tt}+\beta_1c_{1,t}+(\lambda+1-M)c_1+\delta_2c_0=0,\hfill \\
\beta_2c_{2,tt}+\beta_1c_{2,t}+(\lambda+2-M)c_2+\delta_3c_1-4(\alpha+1)Mc_0=0
\end{gathered}
\end{equation}
where $\delta_l=(M-l)^2+2(M-l)+M\alpha$. Further, we note that $c_1=-\{a_1+\ldots +a_M\}c_0$, $c_2=\{a_1a_2+a_1a_3+\ldots +a_{M-1}a_M\}c_0$, see also relations \eqref{Sym_Pol_Sym_Func_Coeff_non_monic} in section \ref{Symmetric_DS}.

Equation \eqref{PDE_exeptional_OP} possesses polynomial solutions of the form
\begin{equation}
\begin{gathered}
\label{Nparticle_DS_exeptional_OP_ES}
p(z,t)=\sum_{m=0,\,m\not \in I}^{M}b_m(t)P_m(z)
\end{gathered}
\end{equation}
with coefficients $b_m(t)$, $m\in\mathbb{N^+}\setminus\{I\}$ given by \eqref{Nparticle_PDE_linear_Class_Ort_Pol_Calc1}, \eqref{Nparticle_PDE_linear_Class_Ort_Pol_Calc3}. Thus, we see that if such a polynomial does not have multiple roots, then these roots satisfy dynamical equations~\eqref{Nparticle_DS_exeptional_OP}.

We would like to remark that derived systems of orthogonal polynomials miss polynomials of certain degrees. Consequently, as soon as polynomial solution \eqref{Nparticle_DS_exeptional_OP_ES} is constructed one comes across a question whether expression \eqref{Nparticle_DS_exeptional_OP_ES} is the most general polynomial solution of partial differential equation \eqref{PDE_exeptional_OP}. For certain systems of  orthogonal polynomials including the family $PL_m=W[L_1^{(\alpha)},L_2^{(\alpha)},L_m^{(\alpha)}]$, $m\in\mathbb{N^+}\setminus\{1,2\}$ we have checked that it is indeed the case. For this aim one can substitute the general expression for polynomial solutions (in terms of orthogonal polynomials under consideration and  powers of $z$) into equation \eqref{PDE_exeptional_OP} and set to zero the corresponding coefficients. For example, in the case of the polynomials $PL_m=W[L_1^{(\alpha)},L_2^{(\alpha)},L_m^{(\alpha)}]$, $m\in\mathbb{N^+}\setminus\{1,2\}$ we have substituted the following representation
\begin{equation}
\begin{gathered}
\label{Nparticle_DS_exeptional_OP_ES_alt}
p(z,t)=\sum_{m=0,\,m\neq 1,2}^{M}b_m(t)PL_m(z)+d_2(t)z^2+d_1(t)z+d_0(t),
\end{gathered}
\end{equation}
where for $d_0(t)$ we can take a partial solution of the corresponding ordinary differential equation. Further, we have shown that $d_2(t)=0$, $d_1(t)=0$, $d_0(t)=0$.

\section{Two and three--particle dynamical systems} \label{Sec:MP23}

In this section we consider the problem of integrating a class of two and three--particle polynomial dynamical systems. We begin with the  two--particle  non--autonomous dynamical systems given by
\begin{equation}
\begin{gathered}
\label{P2_DS}
a_{1,t}=\beta_1(t)a_1^2+\beta_2(t)a_2^2+\beta_3(t)a_1a_2+\beta_4(t)a_1+\beta_5(t)a_2+\beta_6(t),\\
a_{2,t}=\beta_1(t)a_2^2+\beta_2(t) a_1^2+\beta_3(t) a_1a_2+\beta_4(t)a_2+\beta_5(t)a_1+\beta_6(t).
\end{gathered}
\end{equation}
Let us suppose that the functions $\beta_1(t)$, $\ldots$, $\beta_6(t)$ are entire or meromorphic.  In order to apply the polynomial method we need the following relation: $a_j=\{z-p_z\}_{x=a_k}$, $j\neq k$. Applying the polynomial method we obtain the following  partial differential equation
\begin{equation}
\begin{gathered}
\label{P2_PDE}
p_t+\beta_2p_z^3-\{(2\beta_2+\beta_3)z+\beta_5\}p_z^2+\{(\beta_1+\beta_2+\beta_3)z^2+(\beta_4+\beta_5)z+\beta_6(t)\}p_z\\
-\{f_0(t)z+f_1(t)\}p=0.
\end{gathered}
\end{equation}
For subsequent convenience we introduce the new function: $\gamma(t)=\beta_1(t)+\beta_2(t)-\beta_3(t)$. Substituting \eqref{Polynomial_general_coeff} with $c_0(t)=1$, $M=2$ into this equation yields the equalities
\begin{equation}
\begin{gathered}
\label{P2_Calc1}
f_0=2\gamma,\quad f_1=(3\beta_2-\beta_1-\beta_3)w+2(\beta_4-\beta_5),\quad c_1=w,\\
 c_2=\frac{1}{2\gamma}\{w_t+(\beta_2+\beta_1)w^2-(\beta_4+\beta_5)w+2\beta_6(t)\}
\end{gathered}
\end{equation}
valid in the case $\gamma\not\equiv0$. Note that we do not exclude the situations, when the function $\gamma(t)$ possesses isolated zeros. The function $w(t)$ satisfies the following ordinary differential equation
\begin{equation}
\begin{gathered}
\label{P2_Calc2}
w_{tt}+(4\beta_1-\gamma)ww_t+\{\beta_1-\beta_2\}\{2(\beta_1+\beta_2)-\gamma\}w^3+\{\beta_5-3\beta_4-h\}w_t\\
+\{(\beta_4-\beta_5)\gamma
+4(\beta_2\beta_5
-\beta_1\beta_4)
+(\beta_1+\beta_2)_t-(\beta_1+\beta_2)h\} w^2
 +\{2(\beta_4^2-\beta_5^2)\\
 +4(\beta_1-\beta_2)\beta_6
 -(\beta_4+\beta_5)_t+(\beta_4+\beta_5)h\}w
 +2\{\beta_{0,t}-h\}\beta_6\}+4(\beta_5-\beta_4)=0,
\end{gathered}
\end{equation}
where $h(t)=(\log\gamma)_t$. This equation belongs to the class of second--order differential equations studied by P. Painlev\'{e}, B. Gambier, and their colleagues \cite{Painleve01, Gambier01}, see the introduction.

Consequently, if the functions $\beta_1(t)$, $\ldots$, $\beta_6(t)$   are taken in such a way that equation \eqref{P2_Calc2} becomes one of the canonical equations found by P. Painlev\'{e} and B. Gambier or is reducible to a canonical equation via the M\"{o}bius transformation
\begin{equation}
\begin{gathered}
\label{P2_equivalence_transformation}
w(t)=\frac{\lambda_1(t)W(\tilde{t})+\lambda_2(t)W(\tilde{t})}{\lambda_3(t)W(\tilde{t})+\lambda_4(t)W(\tilde{t})},\quad \tilde{t}=\Psi(t),\quad
\lambda_1\lambda_4-\lambda_2\lambda_3\neq0,
\end{gathered}
\end{equation}
where $\Psi(t)$, $\lambda_j(t)$, $j=1$ $\ldots$ $4$ are locally analytic functions, then we conclude that polynomial solution \eqref{Polynomial_general_coeff} with $c_0(t)=1$, $M=2$ of equation \eqref{P2_Calc2} is constructed. The original dynamical system can be integrated finding the roots of the quadratic  equation.

For example, equation \eqref{P2_Calc2}  coincides with the second Painlev\'{e} equation
\begin{equation}
\begin{gathered}
\label{P2_exact}
w_{tt}-2w^3-tw\pm\frac12=0
\end{gathered}
\end{equation}
provided that the functions $\beta_1(t)$, $\ldots$, $\beta_5(t)$ are constants and
\begin{equation}
\begin{gathered}
\label{P2_Calc3}
\beta_2=\beta_1\pm 1,\quad \beta_3=-2\beta_1\pm 1,\quad \beta_4=0,\quad \beta_5=0,\quad \beta_6(t)=\pm\frac{t}{4}.
\end{gathered}
\end{equation}
Note that the second Painlev\'{e} equation in the form \eqref{P2_exact} possesses one--parametric family of exact solutions expressible via the Airy functions.
Some exact solutions of equation \eqref{P2_Calc2} in autonomous case are given in article \cite{ML01}.

If $\gamma\equiv0$, then the functions $f_0$, $f_1$ take the form
\begin{equation}
\begin{gathered}
\label{P2_Calc1_case2}
f_0=0,\quad f_1=2(\beta_2-\beta_1)w+2(\beta_4-\beta_5).
\end{gathered}
\end{equation}
The coefficient $c_1=w$ satisfies the  first--order ordinary differential equation
\begin{equation}
\begin{gathered}
\label{P2_Calc2_case2}
w_t+(\beta_1+\beta_2)w^2-(\beta_4+\beta_5)w+2\beta_6=0,
\end{gathered}
\end{equation}
which is linear whenever $\beta_1\equiv-\beta_2$ and a Riccati equation in the case $\beta_1\not\equiv-\beta_2$. The coefficient $c_2=y$ solves the following linear nonhomogeneous differential equation
\begin{equation}
\begin{gathered}
\label{P2_Calc3_case2}
y_t+2\{(\beta_1-\beta_2)w+\beta_5-\beta_4\}y+\beta_2w^3-\beta_5w^2+\beta_6w=0
\end{gathered}
\end{equation}
Note that Riccati equation \eqref{P2_Calc2_case2} is linearizable via the substitution
\begin{equation}
\begin{gathered}
\label{P2_Calc4_case2}
w=\frac{\psi_t}{(\beta_1+\beta_2)\psi}, \quad \beta_1\neq-\beta_2
\end{gathered}
\end{equation}
with the function $\psi(t)$ satisfying the equation
\begin{equation}
\begin{gathered}
\label{P2_Calc5_case2}
\psi_{tt}-\{\beta_4+\beta_5+(\log[\beta_1+\beta_2])_t\}\psi_t+2\{\beta_1+\beta_2\}\beta_6\psi=0.
\end{gathered}
\end{equation}

Further let us turn to a three--dimensional case. The following polynomial multi--particle dynamical system
\begin{equation}
\begin{gathered}
\label{General_DS_DH}
a_{k,t}=\beta_1\prod_{j=1,j\neq k}^{3}a_j+\beta_2a_k\sum_{j=1,j\neq k}^3a_j+\beta_3\sum_{j=1,j\neq k}^3a_j^2+\beta_4a_k^2\\
+\beta_5\sum_{j=1,j\neq k}^3a_j+\beta_6
a_k+\beta_7,\quad k=1,2,3
\end{gathered}
\end{equation}
contains as partial cases several systems of great practical importance. If $\beta_1=1$, $\beta_2=-1$, and $\beta_j=0$, $j=2$, $\ldots$, $7$ system \eqref{General_DS_DH} is exactly the classical Darboux--Halphen system
\begin{equation}
\begin{gathered}
\label{Euler_DH}
a_{k,t}=\prod_{j=1,j\neq k}^{3}a_j-a_k\sum_{j=1,j\neq k}^3a_j,\quad k=1,2,3,
\end{gathered}
\end{equation}
which first appeared in Darboux's works on triply orthogonal surfaces \cite{Darboux01} and was later solved by Halphen \cite{Halphen01, Halphen02}.
In successive studies, the classical Darboux--Halphen system  has arisen as the Einstein field
equations for a diagonal self--dual Bianchi--IX metric with Euclidean signature and in the similarity reductions of associativity equations on a
three--dimensional Frobenius manifold \cite{Gibbons01}.

The Euler's dynamical system
\begin{equation}
\begin{gathered}
\label{Euler}
\xi_{1,t}=\sigma_1\xi_2\xi_3+\beta_6\xi_1+\sqrt{\sigma_1}\beta_7,\\
\xi_{2,t}=\sigma_2\xi_1\xi_3+\beta_6\xi_2+\sqrt{\sigma_2}\beta_7,\\
\xi_{3,t}=\sigma_3\xi_1\xi_2+\beta_6\xi_3+\sqrt{\sigma_3}\beta_7,
\end{gathered}
\end{equation}
describing the rotation of a rigid body can be transformed into the multi--particle form
\begin{equation}
\begin{gathered}
\label{Euler_MP}
a_{k,t}=\beta_1\prod_{j=1,j\neq k}^{3}a_j+\beta_6a_k+\beta_7,\quad k=1,2,3
\end{gathered}
\end{equation}
by means of the transformation $\xi=Ba$ with $B=\text{diag}(\sqrt{\sigma_1},\sqrt{\sigma_2},\sqrt{\sigma_3})$. In equations \eqref{Euler_MP} we use the designation $\beta_1=\det B$.
Dynamical system \eqref{Euler_MP} is also of the form \eqref{General_DS_DH}.

In order to apply the polynomial method to system \eqref{Euler_MP} we need the following equalities
\begin{equation}
\begin{gathered}
\label{DS_relations_three_dim}
\sum_{j=1,j\neq k}^3a_j=\left\{2z-\frac12 p_{zz}\right\}_{z=a_k},\quad \prod_{j=1,j\neq k}^3a_j=\left\{p_z-\frac12z p_{zz}+z^2\right\}_{z=a_k},\\
\sum_{j=1,j\neq k}^3a_j^2=\left\{\frac14 p_{zz}^2-zp_{zz}-2p_z+2z^2\right\}_{z=a_k}.
\end{gathered}
\end{equation}
The polynomial method gives the  partial differential equation of the form
\begin{equation}
\begin{gathered}
\label{Euler_PDE}
p_t+\frac{\beta_3}{4}p_zp_{zz}^2-\frac12\left\{(2\beta_3+\beta_2+\beta_1)z+\beta_5\right\}p_zp_{zz}+\{\beta_1-2\beta_3\}p_z^2
+\{(\beta_1+\beta_4\\
+2\beta_2+2\beta_3)z^2+(2\beta_5+\beta_6)z+\beta_7\}p_z-\{f_0(t)z+f_1(t)\}p=0.
\end{gathered}
\end{equation}
We are interested in third--degree polynomials with simple roots that solve this equation. Substituting \eqref{Polynomial_general_coeff} with $c_0(t)=1$, $M=3$ into equation \eqref{Euler_PDE}, we get relations
\begin{equation}
\begin{gathered}
\label{Euler_Calc1}
f_0=3(\beta_1+\beta_4-\beta_2-\beta_3),\quad f_1=(\beta_3-\beta_4+2\beta_1-2\beta_2)c_1+3(\beta_6-\beta_5)
\end{gathered}
\end{equation}
and three equations for the coefficient functions $c_1(t)$, $c_2(t)$, $c_3(t)$:
\begin{equation}
\begin{gathered}
\label{gen_DH_Calc2}
c_{1,t}+(2\beta_3+\beta_4)c_1^2-(2\beta_5+\beta_6)c_1+\gamma_1c_2+3\beta_7=0,\\
c_{2,t}+2\beta_3c_1^3-2\beta_5c_1^2+(\beta_1+\beta_2+\beta_4-5\beta_3)c_1c_2+2\beta_7c_1+2(\beta_5-\beta_6)c_2+3\gamma_2c_3=0,\\
c_{3,t}+\beta_3c_1^2c_2-\beta_5c_1c_2+(2\beta_2+\beta_4-2\beta_1-\beta_3)c_1c_3+(\beta_1-2\beta_3)c_2^2+\beta_7c_2+3(\beta_5-\beta_6)c_3=0.
\end{gathered}
\end{equation}
Here we have introduced notation $\gamma_1=\beta_1+2\beta_2-2\beta_4-4\beta_3$, $\gamma_2=\beta_3+\beta_2-\beta_1-\beta_4$. If $\gamma_1\not\equiv 0$ and $\gamma_2\not\equiv0$, then we  solve the first equation with respect to $c_2$, the second equation with respect to $c_3$ and substitute the resulting relations into the third equation. This yields the following third--order ordinary differential equation
\begin{equation}
\begin{gathered}
\label{gen_DH_ODE}
\gamma_1w_{ttt}+(\delta_1 w+\delta_5)w_{tt}+\delta_2 w_t^2+(\delta_3w^2+\delta_6w+\delta_7)w_{t}+\delta_4w^4\\
 +\delta_8w^3+\delta_9w^2+\delta_{10}w+\delta_{11}=0.
\end{gathered}
\end{equation}
for the function $c_1=w$. The coefficients of equation \eqref{gen_DH_ODE} are given in table \ref{table:DS_DH_coeff}. As soon  as a solution $w(t)$ of this equation is found, the coefficients $c_2$, $c_3$ are polynomially expressible via $w(t)$ and its derivatives. Again we do not exclude the situations, when the functions $\gamma_1(t)$, $\gamma_2(t)$  possess isolated zeros.

Let us consider the case $\gamma_1\equiv0$. If $\beta_4\equiv-2\beta_3$, then the first equation in \eqref{gen_DH_Calc2} is linear, otherwise this equation is the Riccati equation linearizable via the substitution
\begin{equation}
\begin{gathered}
\label{gen_DH_Calc3}
c_1=\frac{\psi_t}{(2\beta_3+\beta_4)\psi}, \quad \beta_4\neq-2\beta_3
\end{gathered}
\end{equation}
with the function $\psi(t)$ satisfying the equation
\begin{equation}
\begin{gathered}
\label{P2_Calc5_case2}
\psi_{tt}-(2\beta_5+\beta_6)\psi_t+3(2\beta_3+\beta_4)\beta_7\psi=0.
\end{gathered}
\end{equation}
Further, if $\gamma_2\not\equiv0$, then  solving the second equation in \eqref{gen_DH_Calc2} with respect to $c_3$ and substituting the result into the third equation, we obtain a second--order non-autonomous ordinary differential equation for the function $c_2$. If in addition to $\gamma_1$ the coefficient $\gamma_2$ identically equals zero, then the second equation in \eqref{gen_DH_Calc2} is a first--order linear inhomogeneous equation. Solving this equation we are left with the third equation in \eqref{gen_DH_Calc2}, which also becomes a first--order linear inhomogeneous equation.

Ordinary differential equation \eqref{gen_DH_ODE} belongs to the class of third--degree equations studied by J. Chazy \cite{Chazy01}, see the introduction.
In article \cite{Cosgrove01} C.M. Cosgrove developed the works of J. Chazy and collected all third--order ordinary differential equations in polynomial class having the Painlev\'{e} property. The general solutions of these equations are known. If the parameters of dynamical system \eqref{General_DS_DH}  are taken in such a way that equation \eqref{gen_DH_ODE} is exactly one from those given in \cite{Cosgrove01} or equation \eqref{gen_DH_ODE} is equivalent to one of the equations from article \cite{Cosgrove01}, then one can find the coefficient $c_1(t)$ and, consequently, coefficients $c_2(t)$, $c_3(t)$. Thus we come to a conclusion that polynomial solution \eqref{Polynomial_general_coeff} with $c_0(t)=1$, $M=3$ of equation \eqref{Euler_PDE} is constructed. The original dynamical system can be integrated finding the roots of this polynomial. Recall that two equations in polynomial class are regarded as equivalent if their solutions $w$, $\tilde{w}$ are related via the transformation
\begin{equation}
\begin{gathered}
\label{DS_equivalence_transformation}
w(t)=\lambda_0(t)W(\tilde{t})+\lambda_1(t),\quad \tilde{t}=\Psi(t).
\end{gathered}
\end{equation}
In principle one may consider M\"{o}bius transformation \eqref{P2_equivalence_transformation},  but with possible loss of polynomial dependance of the function $R$ in \eqref{Chazy_class}.

It is an interesting fact that C.M. Cosgrove has also given a number of third--order equations of the form \eqref{gen_DH_ODE} that are not of Painlev\'{e}--type but nevertheless that are integrable in the sense that one can find their general solutions \cite{Cosgrove01}.

Let us examine some interesting partial cases. In what follows we shall consider the autonomous case, i.e. we suppose that all the functions $\beta_j(t)$, $j=1$, $\ldots$, $7$ are constants.

Setting $\beta_2=\beta_1$, $\beta_3=-\beta_1$, $\beta_4=-\beta_1$, $\beta_1\neq0$ in \eqref{gen_DH_ODE} we obtain the  linear third--order equation
\begin{equation}
\begin{gathered}
\label{gen_DH_ODE_partial case_linear}
w_{ttt}+3(\beta_5-2\beta_6)w_{tt}+(4\beta_5+11\beta_6)(\beta_6-\beta_5)w_t-6\{(2\beta_5+\beta_6)w-3\beta_7\}(\beta_5-\beta_6)^2=0.
\end{gathered}
\end{equation}

If all the parameters in \eqref{General_DS_DH} are chosen in such a way that system \eqref{General_DS_DH} becomes the classical Darboux--Halphen system \eqref{Euler_MP}, then the function $y=2w$ satisfies the Chazy-III equation
\begin{equation}
\begin{gathered}
\label{DH_Chazy3}
y_{ttt}-2yy_{tt}+3y_t^2=0.
\end{gathered}
\end{equation}
This equation is a famous Painlev\'{e}--type equation admitting a movable natural barrier. Its general solution can be obtained inverting a hypergeometric function or using Schwarzian
triangle functions \cite{Cosgrove01}. Setting $\beta_3=-(\beta_1+\beta_2)/2$, $\beta_5=0$, $\beta_6=0$, $\beta_7=0$ in equation \eqref{gen_DH_ODE},
we see that the function $y=-2(\beta_4+\beta_2)w$, $\beta_2\neq-\beta_4$ satisfies the following third--order equation
\begin{equation}
\begin{gathered}
\label{DH_Chazy12}
y_{ttt}-2yy_{tt}+3y_t^2+\sigma(6y_t-y^2)^2=0,\quad \sigma=\frac{(2\beta_4-\beta_1)^2-\beta_2(7\beta_2+4\beta_4+6\beta_1)}{16(\beta_2+\beta_4)(3\beta_1-2\beta_4+4\beta_2)},
\end{gathered}
\end{equation}
which is the Chazy-XII equation if $\sigma=4/(n^2-36)$, $n\in \mathbb{N}/\{1,6\}$. System \eqref{General_DS_DH} with $\beta_3=-(\beta_1+\beta_2)/2$, $\beta_5=0$, $\beta_6=0$, $\beta_7=0$ can be regarded as a generalization of the classical Darboux--Halphen system. For details on the Chazy-XII equation see \cite{Cosgrove01}.

\begin{table}[t]
       \begin{tabular}[pos]{|l|}
        \hline
        $\delta_1=\frac{\gamma_1}{3}(2\gamma_1+5\gamma_2+21\beta_4-3\beta_3)$\\
         $\delta_2=\frac23(\gamma_1^2+\gamma_1\gamma_2-3\gamma_2^2)+2\gamma_1(\beta_3+2\beta_4)$\\
        $\delta_3=\frac23\gamma_1\gamma_2(2\gamma_1-\gamma_2)+18\gamma_1(\beta_4^2-\beta_3^2)+8\gamma_2(\beta_4\gamma_1-\beta_3\gamma_2)
+4\beta_4(\gamma_1^2-\gamma_2^2)+4\gamma_1\beta_3(\gamma_2-\gamma_1)$\\
        $\delta_4=\frac23(\gamma_1+6\beta_3+3\beta_4)(3\gamma_1\{\beta_3-\beta_4\}^2+2\gamma_1\gamma_2\{\beta_3-\beta_4\}-2\beta_3\gamma_2^2)$\\

$\delta_5=3\gamma_1(\beta_5-2\beta_6)-2\gamma_{1,t}-\gamma_1h_2$\\
$\delta_6=6\gamma_1\{\beta_3+\beta_4\}_t-3(\beta_3+3\beta_4)\gamma_{1,t}-2\gamma_1(\beta_3+2\beta_4)h_2-\frac13(2\gamma_1h_2+5\gamma_2h_1)+\frac{10}{3}(\beta_5-\beta_6)\gamma_1^2$ \\
$\qquad -\frac13(19\beta_6+2\beta_5)\gamma_1\gamma_2+4(2\beta_5+\beta_6)\gamma_2^2+(\{22\beta_5-\beta_6\}\beta_3+\{8\beta_5-29\beta_6\}\beta_4)\gamma_1$\\
$\delta_7=-\gamma_{1,tt}+(2h_1+h_2+7\beta_6-\beta_5)\gamma_{1,t}+(3\beta_6h_2-2\{2\beta_6+\beta_5\}_t)\gamma_1 +6\beta_7(\gamma_1-6\gamma_2)\gamma_2$\\
$\qquad +(\beta_6\{11\beta_6-7\beta_5\}+6\beta_7\{\beta_4-\beta_3\}-4\beta_5^2)\gamma_1$\\
$\delta_8=\frac53(2\beta_3+\beta_4)(3\{\beta_3-\beta_4\}-\gamma_2)\gamma_{1,t}+\frac23(\beta_3-\beta_4)(3\{2\beta_3+\beta_4\}+\gamma_1)\gamma_1h_2+
\frac13(2\gamma_1+5\gamma_2$\\
$\qquad -3\beta_3+21\beta_4)\gamma_1\beta_{4,t}+\frac23(5\gamma_2-\gamma_1-21\beta_3+12\beta_4)\gamma_1\beta_{3,t}+ \frac23(2\beta_5+\beta_6)(6\beta_4+12\beta_3+\gamma_1)\gamma_2^2$\\
$\qquad -\frac43(\beta_5-\beta_6)(3\{\beta_3-\beta_4\}-\gamma_2)\gamma_1^2
+4(\{4\beta_3-\beta_4\}\beta_5-\{\beta_3+2\beta_4\}\beta_6)\gamma_1\gamma_2$\\
$\qquad+18(\beta_4-\beta_3)(2\beta_5\beta_3-\{\beta_3+\beta_4\}\beta_6)\gamma_1$\\
$\delta_9=l(\beta_3,\beta_4)+(2\{7\beta_5 -4\beta_6\}\beta_{3,t}+\{\beta_5-7\beta_6\}\beta_{4,t})\gamma_1 +2\{\frac13(\gamma_1-5\gamma_2)+7\beta_3-4\beta_4\}\beta_{5,t}$\\
$\qquad +5\left\{\frac13(2\beta_5+\beta_6\gamma_2)+(\beta_3+2\beta_4)\beta_6+(4\beta_3-\beta_4)\beta_5\right\}\gamma_{1,t}-\{\frac13(2\gamma_1+5\gamma_2)-\beta_3+7\beta_4\}\beta_{6,t}
$\\
$\qquad +2\left\{\frac13(\beta_5-\beta_6)\gamma_1-(\beta_3+2\beta_4)\beta_6+(4\beta_3-\beta_4)\beta_5\right\}\gamma_1h_2+2(\beta_5-\beta_6)^2\gamma_1^2$\\
$\qquad -2(\{2\beta_5+\beta_6\}^2 +\{6(\beta_4+2\beta_3)+\gamma_1\}\beta_7)\gamma_2^2 +4(\{\beta_6-\beta_5\}\{\beta_6+2\beta_5\}+3\{\beta_4-\beta_3\}\beta_7)\gamma_1\gamma_2$\\
$\qquad +18(\{\beta_6^2-\beta_5^2\}\beta_4+\{\beta_3-\beta_4\}^2\beta_7
+2\{\beta_5-\beta_6\}\beta_3\beta_5)\gamma_1$\\
$\delta_{10}=l(\beta_5,\beta_6)-5\beta_7\gamma_2\gamma_{1,t} +\{2\beta_5^2-\beta_6^2-\beta_5\beta_6+3(\beta_3-\beta_4)\beta_7\}\{5h_1+2h_2\}\gamma_1+\{6(\beta_4-\beta_3)_t\beta_7$\\
$\qquad +2(4\beta_6-7\beta_5)\beta_{5,t}+(7\beta_6-\beta_5)\beta_{6,t}+5(\gamma_2-3[\beta_3-\beta_4])\beta_{7,t}\}\gamma_1+12(2\beta_5+\beta_6)\beta_7\gamma_2^2$\\
$\qquad-6(\beta_5-\beta_6)^2(2\beta_5+\beta_6)\gamma_1 +12(\beta_5-\beta_6)(3\{\beta_4-\beta_3\}+\gamma_2)\beta_7\gamma_1$\\
$\delta_{11}=3\beta_{7,tt}\gamma_1-3\beta_7\gamma_{1,tt}+3(2h_1^2\gamma_1+h_2)\beta_7-3(2\gamma_{1,t}+\gamma_1h_2)\beta_{7,t}+6\{\beta_5-\beta_6\}_t\beta_7\gamma_1$\\
$\qquad +\{\beta_6-\beta_5\}\{15(\beta_7\gamma_{1,t}-\beta_{7,t}\gamma_1)+6\beta_7\gamma_1h_2\}+18(\{\beta_5-\beta_6\}^2\gamma_1-\beta_7\gamma_2)\beta_7$ \\
        \hline
    \end{tabular}
    \caption{Coefficients of equation \eqref{gen_DH_ODE}. Designations: $h_1=\{\log \gamma_1\}_t$, $h_2=\{\log \gamma_2\}_t$, $l(f,g)=\{2f+g\}_{tt}\gamma_1+\{(2h_1+h_2)\gamma_1h_1-\gamma_{1,tt}\}\{2f+g\}-\{2f+g\}_t\{2h_1+h_2\}\gamma_1$.} \label{table:DS_DH_coeff}
\end{table}

In the case of the Euler's dynamical system \eqref{Euler_MP} equation \eqref{gen_DH_ODE} takes the form
\begin{equation}
\begin{gathered}
\label{gen_DH_ODE_partial case_linear_Euler}
w_{ttt}-(\beta_1 w+6\beta_6)w_{tt}-2\beta_1 w_t^2-(2\beta_1^2w^2-7\beta_1\beta_6w+14\beta_7\beta_1-11\beta_6^2)w_t
+2\beta_6\beta_1^2w^3\\
-2\beta_1(2\beta_6^2+\beta_7\beta_1)w^2
+6\beta_6(4\beta_7\beta_1-\beta_6^2)w+18\beta_7(\beta_6^2-\beta_7\beta_1)=0.
\end{gathered}
\end{equation}
The function $y=-\beta_1w$ satisfies the Chazy-VII equation
\begin{equation}
\begin{gathered}
\label{gen_DH_ODE_Chazy7}
y_{ttt}+yy_{tt}+2y_t^2-2y^2y_t=0
\end{gathered}
\end{equation}
provided that the parameters $\beta_6$, $\beta_7$ vanish. The general solution of the Chazy-VII equation is known: $y=\{\log\wp_{t}(t-t_0;g_2,g_3)\}_t$ \cite{Cosgrove01}. Here
$\wp(t-t_0;g_2,g_3)$ is the Weierstrass elliptic function and $t_0$, $g_2$, $g_3$ are arbitrary constants.

Suppose that the present status concerning integrability of equation \eqref{gen_DH_ODE} at particular choice of the parameters  is unknown; then one comes across a problem of finding some exact solutions. Let us derive certain  well--behaved (meromorphic) exact solutions of equation \eqref{gen_DH_ODE}.  For this aim we shall use the method based on Mittag--Leffler's expansions of meromorphic functions, see also \cite{Demina01, Demina02, Demina03, Demina04}. In autonomous case equation \eqref{gen_DH_ODE} is invariant under the transformation $t\mapsto t-t_0$, where $t_0$ is an arbitrary parameter. Thus equation \eqref{gen_DH_ODE} may admit periodic meromorphic solutions.

A single-valued function $f(t)$ of one complex variable $t$ is called meromorphic (in the complex plane)
if it does not possess singularities in the finite points other than (isolated) poles. A meromorphic function holomorphic in the whole complex plane is entire.  Any meromorphic functions is characterized by its behavior at infinity. There are three possibilities: (a) a point at infinity is a removable singularity or an (isolated) pole; (b) a point at infinity infinity is an (isolated) essential singularity; (c) a point at infinity infinity is a non--isolated singularity.
In the case (a) the function $f(t)$
is rational. In the case (b) the function $f(t)$ is  transcendental entire. Finally, in the remaining case (c) the function $f(t)$ possesses infinitely many poles and is called transcendental meromorphic.

The set of periods of a nonconstant meromorphic function $f(t)$ is a discrete additive group: $\Omega=\mathbb{Z}$ $\tau_1$ $+$ $\mathbb{Z}$ $\tau_2$. The parameters $\tau_1$, $\tau_2$ are called fundamental (or principal) periods. If $\tau_1\tau_2\neq0$, then the meromorphic function is said to be doubly periodic (or elliptic). Note that the ratio $\tau_1/\tau_2$ cannot be real. If one of the fundamental periods, for example, $\tau_2$ is zero, then the meromorphic function is said to be simply periodic.

All the values an elliptic function takes in the parallelogram built on the periods $\tau_1$, $\tau_2$ periodically arise in other point of the complex plane. The number of poles in a parallelogram of
periods, counting multiplicity, is called the order of an elliptic function.  Analogously, behavior of a simply periodic meromorphic function in the complex plane is characterized by its behavior in a stripe built on the period $\tau_1$.

In what follows the fundamental period of a simply periodic meromorphic function we denote as $T$. Let us construct simply periodic meromorphic functions with one simple pole in a stripe of periods. Without loss of generality we place this pole at the origin. Using Mittag--Leffler's expansions of meromorphic functions, we get
\begin{equation}
\begin{gathered}
\label{Periodic_example} f(t)=\frac{c_{-1}}{t}+\sum_{n\in\mathbb{Z},\,n\neq
0}^{}\left[\frac{c_{-1}}{t-nT}+\frac{c_{-1}}{nT}\right]+ h(t),
\end{gathered}
\end{equation}
where $h(t)$ is a periodic  entire function with the fundamental period $T$. Note that this function can be presented in the form
\begin{equation}
\begin{gathered}
\label{Periodic_example_h}
h(t)=\sum_{k=-\infty}^{\infty}h_k\exp\left[\frac{2\pi i k }{T}t\right].
\end{gathered}
\end{equation}
The series in expression \eqref{Periodic_example} is uniformly convergent in any domain
not including the points $t=\mathbb{Z}T$. The constant term is added to each item in order to provide convergence. Implementing the summation yields
\begin{equation}
\begin{gathered}
\label{Periodic_example_res} f(t)=\frac{\pi}{T}\cot
\left(\frac{\pi t}{T}\right)+ h(t),
\end{gathered}
\end{equation}
Calculating the derivative of expression \eqref{Periodic_example} we find a simply periodic meromorphic function possessing (in a stripe of periods) one double pole with zero residue.  In similar way we can construct simply periodic meromorphic functions with triple, etc pole in a stripe of periods. Any simply periodic meromorphic function with finite amount of poles $b_1$, $\ldots$, $b_M$ in the fundamental stripe of periods  is given by
\begin{equation}
\begin{gathered}
\label{Periodic_example_many poles} f(t)= \frac{\pi}{T}\left\{\sum_{i=1}^{M}\sum_{k=1}^{p_i}\frac{(-1)^{k-1}
c_{-k}^{(i)}}{(k-1)!}\frac{d^{k-1}}{dt^{k-1}}\right\}\cot
\left(\frac{\pi \{t-b_i\}}{T}\right)+ h(t),
\end{gathered}
\end{equation}
where $p_i$ is the order of the pole $z=b_i$ and $h(t)$ is again a periodic entire function.

The simplest example of an elliptic function is the Weierstrass $\wp$--function defined as
\begin{equation}
\begin{gathered}
\label{Periodic_Elliptic_Weierstrass}
\wp(t)=\frac{1}{t^2}+{\sum_{(n,\,m)\neq(0,\,0)}}
\left[\frac{1}{(t-n\tau_1-m\tau_2)^2}-\frac{1}{(n\tau_1+m\tau_2)^2}\right].
\end{gathered}
\end{equation}
The series in expression \eqref{Periodic_Elliptic_Weierstrass} is uniformly convergent in any domain
not including the points $t=\mathbb{Z}\tau_1+\mathbb{Z}\tau_2$. The Weierstrass elliptic function satisfies the following first--order
ordinary differential equation
\begin{equation}
\begin{gathered}
\label{Wier} (\wp_t)^2=4\wp^3-g_2\wp-g_3.
\end{gathered}
\end{equation}
Integrating the expression in \eqref{Periodic_Elliptic_Weierstrass}, multiplying the result by $-1$ and using normalizing condition
\begin{equation}
\begin{gathered}
\label{Zeta_NC}\lim_{t\rightarrow0}\left[\int\wp(t) dt +\frac{1}{t}\right]=0.
\end{gathered}
\end{equation}
we obtain the Weierstrass $\zeta$--function
\begin{equation}
\begin{gathered}
\label{Periodic_Elliptic_Weierstrass_zeta}
\zeta(t)=\frac{1}{t}+{\sum_{(n,\,m)\neq(0,\,0)}}
\left[\frac{1}{t-n\tau_1-m\tau_2}+\frac{1}{n\tau_1+m\tau_2}+\frac{t}{(n\tau_1+m\tau_2)^2}\right].
\end{gathered}
\end{equation}
Note that the path of integration in \eqref{Zeta_NC} avoids the poles of $\wp(t)$. Thus we see that $\zeta_t(t)=-\wp(t)$. The Weierstrass $\zeta$--function is not elliptic since the sum of the residues of any elliptic function at its poles in a fundamental parallelogram should be zero. But in combinations
\begin{equation}
\begin{gathered}
\label{PDE1_ES_General_zeta}
f(t)=\sum_{i=1}^{M}\varkappa_i \zeta(t-b_i)+h_0
\end{gathered}
\end{equation}
such that the condition $\varkappa_1+\ldots +\varkappa_M =0$ is satisfied the Weierstrass $\zeta$--function produces an elliptic function with $M$ simple poles in a parallelogram of periods: $b_1$, $\ldots$, $b_M$.

Using expression \eqref{PDE1_ES_General_zeta} and the relation $\wp(t)=-\zeta_t(t)$ we can construct the general expression for an elliptic function with $M$ poles in a parallelogram of periods. It is
\begin{equation}
\begin{gathered}
\label{PDE1_ES_General}
f(t)=\left\{\sum_{i=1}^{M} \sum_{k=1}^{p_i}\frac{(-1)^{k-1}
c_{-k}^{(i)}}{(k-1)!}\frac{d^{k-1}}{dt^{k-1}}\right\}\zeta(t-b_i)+h_0,\quad \sum_{i=1}^{M}c_{-1}^{(i)}=0.
\end{gathered}
\end{equation}

 Let us consider a nonlinear ordinary differential
equation
\begin{equation}
\label{MP_ODE} K[w(t)]=0,
\end{equation}
where the expression $K[w(t)]$ is a polynomial in $w(t)$ and its derivatives. Note that equation \eqref{gen_DH_ODE} in autonomous case is of the form \eqref{MP_ODE}. Behavior of a meromorphic function in  a neighborhood of a pole can be characterized by means of its  Laurent series. If one wishes to find its elliptic \eqref{PDE1_ES_General} or simply periodic \eqref{Periodic_example_many poles} solutions explicitly it is necessary to construct families of  Laurent series satisfying equation \eqref{MP_ODE}. For this aim the  Painlev\'{e} methods can be used. Taking equation \eqref{gen_DH_ODE} and applying these methods one can get
\begin{equation}
\begin{gathered}
\label{LE_PDE1}
w^{(j)}(t)=\sum_{k=1}^{p_j}\frac{c_{-k}^{(j)}}{\left\{t-t_0\right\}^k}+\sum_{k=0}^{\infty}c_k^{(j)}\left\{t-t_0\right\}^k,\quad t\rightarrow t_0,\quad  j\in \mathbb{J}.
\end{gathered}
\end{equation}
Here $p_i$ is the order of the pole $t=t_0$ and the index $j$ numerates different families of admissible series. The coefficients of the series can be sequently calculated, some of them may turn to be arbitrary. Note that particular series may exist under certain restrictions on the parameters of the original equation.

Simply periodic solutions with finite amount of poles in a stripe of periods can be
presented as follows
\begin{equation}
\begin{gathered}
\label{PDE1_SPS_General}
w(t)=\frac{\pi}{T}\sum_{j\subseteq \mathbb{J}}\left\{\sum_{i=1}^{M_j}\sum_{k=1}^{p_j}\frac{(-1)^{k-1}
c_{-k}^{(j)}}{(k-1)!}\frac{d^{k-1}}{dt^{k-1}}\right\}\cot
\left[\frac{\pi \left\{t-b_i^{(j)}\right\}}{T}\right] +h(t),
\end{gathered}
\end{equation}
where $T$ is the fundamental period, $M_j$ is the number of poles of type  $j\subseteq \mathbb{J}$, and $h(t)$ is a periodic entire function, see \eqref{Periodic_example_h}.
The general expression for elliptic solutions is the following
\begin{equation}
\begin{gathered}
\label{PDE1_ES_General}
w(t)=\sum_{j\subseteq \mathbb{J}}\left\{\sum_{i=1}^{M_j} \sum_{k=1}^{p_j}\frac{(-1)^{k-1}
c_{-k}^{(j)}}{(k-1)!}\frac{d^{k-1}}{dt^{k-1}}\right\}\zeta(t-b_i^{(j)})+h_0.
\end{gathered}
\end{equation}
The theorem on total sum of the residues of any elliptic function gives a necessary condition for an elliptic solution to exist
\begin{equation}
\begin{gathered}
\label{PDE1_ES_General_Necessary_Condition}
\sum_{j\subseteq \mathbb{J}}\sum_{i=1}^{M_j} c_{-1}^{(j)}=0.
\end{gathered}
\end{equation}
Note that we omit arbitrary constant $t_0$ resulting from the invariance of equation \eqref{MP_ODE} under the transformation $t\mapsto t-t_0$.

Further let us discuss the problem of finding solutions \eqref{PDE1_SPS_General}, \eqref{PDE1_ES_General} in explicit form. In this article  we shall set $h(t)=h_0$ in the case of simply periodic solutions \eqref{PDE1_SPS_General}, the possibility $h(t)\neq h_0$ requires additional treatment.
First of all let us mention that partial derivatives of relations  \eqref{PDE1_ES_General} are  polynomials in $\cot
\left[\pi \left(t-b_i^{(j)}\right)/T\right]$ and elliptic functions, accordingly.   Consequently, substituting relations  \eqref{PDE1_SPS_General}, \eqref{PDE1_ES_General} into  equation \eqref{MP_ODE}, we obtain  a polynomial in $\cot
\left[\pi \left(t-b_i^{(j)}\right)/T\right]$ and an elliptic function, respectively. If the resulting function
does no have poles, then from the Liouville theorem it immediately follows that such a function is a constant. Instead of substituting the solutions themselves we may substitute their Laurent series in a neighborhood of the poles $b_i^{(j)}$.

Our algorithm can be subdivided into several steps.

\textit{Step 1.} Perform local singularity analysis for solutions of equation \eqref{MP_ODE}. Construct all the  Laurent series of the form \eqref{LE_PDE1}.

\textit{Step 2.} Write down general expressions  \eqref{PDE1_SPS_General}, \eqref{PDE1_ES_General}.

\textit{Step 3.} Take any solutions of step $2$ and find its Laurent series in a neighborhood of the poles.

\textit{Step 4.} Substitute all those  Laurent  found at step $3$ that are captured by a supposed solution into the original equation and set to zero coefficients at negative and zero powers of the expression $\left\{t-b_i^{(j)}\right\}$.

\textit{Step 5.} Solve obtained algebraic system.

As a rule, it is not an easy task to find the positions of the poles $b_i^{(j)}$. It turns out that it is sufficient to  obtain the values $B_i^{(j)}=\cot
\left[lb_i^{(j)}\right]$, $l=\pi/T$ in the case of simply periodic solutions and the values $A_i^{(j)}=\wp(b_i^{(j)};g_2,g_3)$ and $B_i^{(j)}=\wp_t(b_i^{(j)};g_2,g_3)$ in the case of elliptic solutions instead. Note that the parameters $A_i^{(j)}$, $B_i^{(j)}$ (in the elliptic case) are not independent, they are related by the equalities
\begin{equation}
\begin{gathered}
\label{DS_elliptic_sols_add_eq}
\left\{B_i^{(j)}\right\}^2=4\left\{A_i^{(j)}\right\}^3-g_2A_i^{(j)}-g_3.
\end{gathered}
\end{equation}
Subsequently, addition formulae for periodic meromorphic functions can be used to rewrite solutions  \eqref{PDE1_SPS_General}, \eqref{PDE1_ES_General} in terms of the new parameters, see  \cite{Demina01, Demina02, Demina03, Demina04}.

As soon as a solution is found it is necessary to verify that its  Laurent series in a neighborhood of different poles in the principle parallelogram of periods are in fact distinct.

Now we pass on to finding well--behaved solutions of equation \eqref{gen_DH_ODE}. In detail we shall consider the case of elliptic solutions only. If at least one of the parameters $\delta_1$, $\delta_2$, $\delta_3$, $\delta_4$ is non--zero and some conditions on these parameters are satisfied, then equation \eqref{gen_DH_ODE} may have solutions with simple poles. Thus the simplest (second--order) elliptic solution is of the form
\begin{equation}
\begin{gathered}
\label{DS_elliptic_sols_gen}
w(t)=c_{-1}\zeta(t-t_0;g_2,g_3)-c_{-1}\zeta(t-b-t_0;g_2,g_3)+\tilde{h}_0,
\end{gathered}
\end{equation}
where  $g_2$, $g_3$, $c_{-1}$, $\tilde{h}_0$ are constants to be found,  and $t_0$ is an arbitrary constant resulting from autonomy of equation \eqref{gen_DH_ODE}. In what follows we omit this constant.  Function \eqref{DS_elliptic_sols_gen} possesses two simple poles ($t=0$ and $t=b$) in  the principle parallelogram of periods. According to the preceding remarks  we introduce the parameters: $A=\wp(b;g_2,g_3)$ and $B=\wp_t(b;g_2,g_3)$  related by the equality $B^2=4A^3-g_2A-g_3$. Using addition  formulae for the Weierstrass $\zeta$--function, we rewrite relation \eqref{DS_elliptic_sols_gen} as follows
\begin{equation}
\begin{gathered}
\label{DS_elliptic_sols_gen_res}
w(z)=-\frac{c_{-1}\{\wp_t(t;g_2,g_3)+B\}}{2\{\wp(t;g_2,g_3)-A\}}+h_0,\quad h_0=\tilde{h}_0+c_{-1}A.
\end{gathered}
\end{equation}

Let us turn to  finding solutions \eqref{DS_elliptic_sols_gen_res} in explicit form. Substituting elliptic function \eqref{DS_elliptic_sols_gen} into equation \eqref{gen_DH_ODE}, we get another elliptic function. If the resulting function
does no have poles, it immediately follows that such a function is a constant since an elliptic function without poles does not exist. Instead of substituting the solution itself we shall substitute its Laurent series in a neighborhood of the poles $t=0$, $t=b$. These Laurent series take the form
\begin{equation}
\begin{gathered}
\label{DS_elliptic_sols_LS}
w(t)=\frac{c_{-1}}{t}+h_0+c_{-1}At-\frac{c_{-1}B}{2}t^2+c_{-1}\left(A^2-\frac{g_2}{10}\right)t^3+\ldots,\quad t\rightarrow 0\\
w(t)=-\frac{c_{-1}}{t-b}+h_0-c_{-1}A\{t-b\}-\frac{c_{-1}B}{2}\{t-b\}^2\hfill \\
-c_{-1}\left(A^2-\frac{g_2}{10}\right)\{t-b\}^3+\ldots,\quad
 t\rightarrow b.
\end{gathered}
\end{equation}
In order to calculate these series we have used expression \eqref{DS_elliptic_sols_gen}. Substituting  series \eqref{DS_elliptic_sols_LS} into equation \eqref{gen_DH_ODE} we set to zero coefficients at negative and zero powers of $t$ and $t-b$. We need only five first coefficients from
each of the series. The resulting system contains nine equations including  $B^2=4A^3-g_2A-g_3$. For all the solutions we find
\begin{equation}
\begin{gathered}
\label{DS_elliptic_sols_par1}
c_{-1}=\varepsilon\sqrt{-\frac{6}{\delta_3}},\, h_0=-\frac{\delta_6}{2\delta_3},\, g_3=4A^3-g_2A-B^2,\,
 A=\frac{4\delta_3\delta_7-\delta_6^2}{24\delta_3},\, \varepsilon=\pm 1
\end{gathered}
\end{equation}
and the following restrictions on the parameters of the original equation
\begin{equation}
\begin{gathered}
\label{DS_elliptic_sols_par2}
\delta_9=\frac{(4\delta_3\delta_7-\delta_6^2)\delta_2\delta_3+12(\delta_6\delta_8+2\delta_4\delta_7)\delta_3-18\delta_4\delta_6^2}{8\delta_3^2},\hfill \\
\delta_5=\frac{12\delta_8\delta_3-18\delta_4\delta_6-\delta_2\delta_3\delta_6}{4\delta_3^2},\quad \delta_1=\frac{3\delta_4}{\delta_3}-\frac{\delta_2}{2},\quad \delta_3\neq 0.
\end{gathered}
\end{equation}
The first family exists under additional restrictions: $\delta_2=0$, $\delta_4=0$, $\delta_8=0$, $\delta_{10}=0$, $\delta_{11}=0$. In this case the parameters $g_2$, $B$ are arbitrary and  solution \eqref{DS_elliptic_sols_gen_res} is the general solution of the equation in question.

For the second family we get $\delta_2=0$, $\delta_4=0$, $\delta_{10}=(3\delta_8\delta_7)/\delta_3$, $\delta_8\neq0$. The parameter $g_2$ is arbitrary and the parameter $B$ is given by
\begin{equation}
\begin{gathered}
\label{DS_elliptic_sols_par3}
B=\varepsilon\sqrt{-\frac{6}{\delta_3}}\frac{\delta_8\delta_6^3+4\delta_{11}\delta_3^3-2\delta_{10}\delta_6\delta_3^2}{144\delta_3\delta_8}.
\end{gathered}
\end{equation}
The third family exists provided that $\delta_2=0$, $\delta_4\neq 0$ and
\begin{equation}
\begin{gathered}
\label{DS_elliptic_sols_par4}
\delta_{11}=\frac{(2\delta_8\delta_3-3\delta_4\delta_6)(2\delta_{10}\delta_3^2-6\delta_8\delta_7\delta_3+9\delta_7\delta_6\delta_4)}{4\delta_3^3\delta_4}.
\end{gathered}
\end{equation}
 The parameter $g_2$ is arbitrary and the parameter $B$ takes the form
\begin{equation}
\begin{gathered}
\label{DS_elliptic_sols_par5}
B=\varepsilon\sqrt{-\frac{6}{\delta_3}}\frac{(4\delta_{10}\delta_3-12\delta_7\delta_8)\delta_3^2+(\delta_6^2+12\delta_7\delta_3)\delta_4\delta_6}{144\delta_3\delta_4}.
\end{gathered}
\end{equation}
If the following conditions
\begin{equation}
\begin{gathered}
\label{DS_elliptic_sols_par6}
\delta_{4}=-\frac{\delta_2\delta_3}{2},\quad \delta_{10}=\frac{(2\delta_2\delta_6+3\delta_8)\delta_7}{\delta_3},\quad \delta_2\neq0
\end{gathered}
\end{equation}
are valid, then we obtain the fourth family of solutions. The parameter $B$ is an arbitrary constant and the parameter $g_2$ reads as
\begin{equation*}
\begin{gathered}
\label{DS_elliptic_sols_par7}
g_2=\left\{\delta_6+\frac{\delta_8}{\delta_2}\right\}B+\varepsilon\sqrt{-\frac{6}{\delta_3}}\frac{(6\delta_7\delta_3-\delta_6^2)\delta_8
\delta_6-2(\delta_2\delta_7^2+2\delta_{11}\delta_3)\delta_3^2+(6\delta_7\delta_3-\delta_6^2)\delta_6^2\delta_2}{144\delta_2\delta_3}
\end{gathered}
\end{equation*}
Finally, under the conditions $\delta_2\neq0$, $\delta_{4}\neq-(\delta_2\delta_3)/2$ equation \eqref{gen_DH_ODE} possesses the fifth family of solutions with the parameters $B$ and $g_2$ given by
\begin{equation}
\begin{gathered}
\label{DS_elliptic_sols_par8}
B=\varepsilon\sqrt{-\frac{6}{\delta_3}}\frac{(\delta_2\delta_3+2\delta_4)\delta_6^3-4(\delta_2\delta_3-6\delta_4)
\delta_3\delta_6\delta_7+8(\delta_{10}\delta_3-3\delta_7\delta_8)\delta_3^2}{144(\delta_2\delta_3+2\delta_4)\delta_3},\\
g_2=\frac{(6\delta_4-\delta_2\delta_3)(18\delta_4+\delta_2\delta_3)\delta_7\delta_6^2+4\{\mu_2\delta_3-\mu_1\delta_6\}\delta_3}
{48(\delta_2\delta_3+2\delta_4)\delta_3^2\delta_2},\hfill
\end{gathered}
\end{equation}
where we have used notation
\begin{equation}
\begin{gathered}
\label{DS_elliptic_sols_par9}
\mu_1=(\delta_2\delta_3-6\delta_4)\delta_3\delta_{10}
+2(18\delta_4-\delta_2\delta_3)\delta_7\delta_8,\hfill \\
\mu_2=(\delta_2\delta_3+2\delta_4)(2\delta_3\delta_{11}+\delta_2\delta_7^2)+4(3\delta_8\delta_7-\delta_3\delta_{10})\delta_8
\end{gathered}
\end{equation}

Further, let us note that if  $\delta_3=0$, $\delta_4=0$ and certain conditions on the parameters $\delta_1$, $\delta_2$, $\delta_6$, $\delta_8$  are satisfied, then  equation \eqref{gen_DH_ODE} may have solutions with second--order poles. Second--order elliptic functions having one double pole in the fundamental parallelogram of periods are the following
\begin{equation}
\begin{gathered}
\label{DS_elliptic_sols_gen2}
w(z)=c_{-2}\wp(t-t_0;g_2,g_3)+h_0,
\end{gathered}
\end{equation}
Without loss of generality, let us omit the arbitrary constant $t_0$. Finding the Laurent series of this function in a neighborhood of its pole $t=0$, we obtain
\begin{equation}
\begin{gathered}
\label{DS_elliptic_sols_gen2_LS}
w(z)=\frac{c_{-2}}{t^2}+h_0+\frac{c_{-2}}{20}g_2t^2+\frac{c_{-2}}{28}g_3t^4+\ldots,\quad  t \rightarrow 0
\end{gathered}
\end{equation}
Here we have written down all the coefficients  essential for further calculations. Substituting series \eqref{DS_elliptic_sols_gen2_LS} into equation \eqref{gen_DH_ODE} with $\delta_3=0$, $\delta_4=0$ and setting to zero the coefficients at negative and zero powers of $t$ yields six families of solutions. For all the families the parameters $c_{-2}$, $h_0$ are following
\begin{equation}
\begin{gathered}
\label{DS_elliptic_sols_gen2_par1}
c_{-2}=-\frac{12}{\alpha_6},\quad h_0=-\frac{\alpha_7}{\alpha_6},\quad \alpha_6\neq0.
\end{gathered}
\end{equation}
Along with this we find two restrictions on the parameters of the original equation
\begin{equation}
\begin{gathered}
\label{DS_elliptic_sols_gen2_par2}
\alpha_8=\frac{\alpha_6}{6}(2\alpha_2+3\alpha_1),\quad \alpha_9=\frac12\alpha_5\alpha_6+(\alpha_1+\alpha_2)\alpha_7.
\end{gathered}
\end{equation}
The first family of solutions exists under additional constrains $\alpha_2\neq0$, $\alpha_1\neq -2\alpha_2$. The invariants $g_2$, $g_3$ are given by
\begin{equation}
\begin{gathered}
\label{DS_elliptic_sols_gen2_par3}
g_2=\frac{\alpha_7^2}{12}+\frac{(\alpha_5\alpha_7-\alpha_{10})\alpha_6}{6(\alpha_1+2\alpha_2)},\quad
g_3=\frac{\alpha_7^3}{216}+\frac{\alpha_6^2\alpha_{11}}{144\alpha_2}+
\frac{(\alpha_5\alpha_6+2\alpha_2\alpha_7)(\alpha_5\alpha_7-\alpha_{10})\alpha_6}{144(\alpha_1+2\alpha_2)\alpha_2}
\end{gathered}
\end{equation}
If $\alpha_2=0$, $\alpha_1\neq0$, and $\alpha_{11}=(\alpha_{10}-\alpha_5\alpha_7)\alpha_5/\alpha_1$ we obtain the second family of elliptic solutions with an arbitrary parameter $g_3$ and the following value of the parameter $g_2$:
\begin{equation}
\begin{gathered}
\label{DS_elliptic_sols_gen2_par4}
g_2=\frac{\alpha_7^2}{12}+\frac{(\alpha_5\alpha_7-\alpha_{10})\alpha_6}{6\alpha_1}.
\end{gathered}
\end{equation}
The third family of solutions exists whenever $\alpha_1=-2\alpha_2$, $\alpha_2\neq0$, $\alpha_{10}=\alpha_5\alpha_7$. In this case the parameter $g_2$ is arbitrary and the parameter $g_3$ is given by
\begin{equation}
\begin{gathered}
\label{DS_elliptic_sols_gen2_par5}
g_3=\frac{1}{12}\left(\alpha_7+\frac{\alpha_5\alpha_6}{2\alpha_2}\right)g_2-\frac{\alpha_7^3}{432}+\frac{(2\alpha_6\alpha_{11}-\alpha_5\alpha_7^2)\alpha_6}{288\alpha_2}.
\end{gathered}
\end{equation}
Further, if $\alpha_1=0$, $\alpha_2=0$, $\alpha_{10}=\alpha_5\alpha_7$, $\alpha_5\neq0$, then we obtain the fourth family of solutions with an arbitrary parameter $g_3$ and the following value of the parameter $g_2$:
\begin{equation}
\begin{gathered}
\label{DS_elliptic_sols_gen2_par6}
g_2=\frac{\alpha_7^2}{12}-\frac{\alpha_6\alpha_{11}}{6\alpha_5}.
\end{gathered}
\end{equation}
Finally, under the restrictions $\alpha_1=0$, $\alpha_2=0$, $\alpha_{5}=0$, $\alpha_{10}=0$, $\alpha_{11}=0$  expression \eqref{DS_elliptic_sols_gen2} is the general solution of the corresponding equation. In this case the invariants $g_2$, $g_3$ are arbitrary, the third arbitrary constant is the parameter $t_0$.

According to results of articles \cite{Demina01,Demina02,Demina03,Demina04} we have classified all second--order elliptic solutions of equation \eqref{gen_DH_ODE}. Note that in the case $g_2^3-27g_3^2=0$ elliptic solutions degenerate to simply periodic or rational. On use of obtained families of elliptic solutions one may construct exact solutions of the corresponding dynamical systems.





\section {Conclusion}

In this article we have considered  polynomial multi--particle dynamical systems in the plane. We have presented a method, the polynomial method, which replaces integration of  polynomial multi--particle dynamical systems by constructing polynomial solutions of partial differential equations.

With the help of the polynomial method we have studied several interesting dynamical systems possessing equilibria given by the roots of classical and some other families of orthogonal polynomials. As a by--product of our results we have obtained several new families of orthogonal polynomials.

We have integrated a wide class of two and three--particle polynomial dynamical systems including a number of physically relevant systems, such as, the Euler's system,  the Darboux--Halphen system, and their generalizations.

\section {Acknowledgements}

This research was partially supported by Russian Science Foundation, project to support research carried out by individual research groups No. 14-11-00258

\end{document}